\documentclass[a4paper]{lipics-v2021}
\usepackage{amsmath,amssymb}
\usepackage{graphicx}
\usepackage{pdflscape}
\usepackage{algpseudocode}
\usepackage{booktabs}
\usepackage{enumitem}
\usepackage{hyperref}
\usepackage{xcolor}
\usepackage{amsthm}
\usepackage{rotating}
\newcommand{\best}[1]{\textcolor{green!50!black}{\textbf{#1}}}
\newcommand{\second}[1]{\textcolor{orange!70!black}{#1}}

\usepackage[ruled,vlined, linesnumbered]{algorithm2e}

\SetCommentSty{mycommfont}
\SetAlFnt{\footnotesize}

\newcommand{\bigO}{\mathcal{O}}
\newcommand{\task}{\tau}
\newcommand{\job}[2]{\task_{#1,#2}}
\newcommand{\jld}[4]{\task_{#1} \xrightarrow{(#3,#4)} \task_{#2}}
\newcommand{\chain}{\zeta}
\newcommand{\hp}{\mathit{HP}}
\newcommand{\wcet}{C}
\newcommand{\rmin}{R_{\min}}
\newcommand{\rmax}{R_{\max}}
\newcommand{\dmin}{D_{\min}}
\newcommand{\dmax}{D_{\max}}
\newcommand{\dpmin}{D'_{\min}}
\newcommand{\WCL}{\mathit{WCL}}

\title{Schedulable Job-Level Dependencies for Cause-Effect Chains via Graph Neural Networks}

\titlerunning{Schedulable Job-Level Dependencies for Cause-Effect Chains via GNN}

\author{Silviu S. Craciunas}{NXP Semiconductors, 
    Vienna, Austria\\Technical University of Denmark, 
    Kongens Lyngby, Denmark}{silviu.craciunas@nxp.com}{https://orcid.org/0000-0003-0116-9265}{}

\author{Christian Hakert}{TU Dortmund, Germany}{christian.hakert@tu-dortmund.de}{https://orcid.org/0000-0001-9992-9415}{}    

\author{Jian-Jia Chen}{RWTH Aachen, Germany}{jian-jia.chen@rwth-aachen.de}{https://orcid.org/0000-0001-8114-9760}{}    

\author{Zdeněk Hanzálek}{Czech Technical University, Czech Republic}{zdenek.hanzalek@cvut.cz}{https://orcid.org/0000-0002-8135-1296}{}    

\author{Paul Pop}{Technical University of Denmark, 
    Kongens Lyngby, Denmark}{paupo@dtu.dk}{https://orcid.org/0000-0001-9981-1775}{}

\authorrunning{S. S. Craciunas et al.} 

\Copyright{S. S. Craciunas, C. Hakert, J-J. Chen, Z. Hanzálek, P. Pop}

\ccsdesc[500]{Computer systems organization~Dependable and fault-tolerant systems and networks}

\keywords{Cause-effect chains, machine learning, scheduling, real-time}
\category{} 

\relatedversion{} 
\hideLIPIcs  

\EventEditors{}
\EventNoEds{2}
\EventLongTitle{}
\EventShortTitle{}
\EventAcronym{}
\EventYear{}
\EventDate{}
\EventLocation{}
\EventLogo{}
\SeriesVolume{}
\ArticleNo{}

\begin{document}
\nolinenumbers

\maketitle

\begin{abstract}	
Modern automotive software architectures comprise large sets of mixed-criticality functions executing on shared multi-core platforms with strict real-time and end-to-end timing requirements. Sensor-to-actuator data propagation in such systems is typically expressed via cause-effect chains with worst-case data-age budgets. Job-level dependencies (JLDs) have been introduced to provide a schedule-agnostic mechanism for bounding the data age independently of the underlying scheduler. The state-of-the-art methods for synthesizing JLDs, however, do not check whether the produced JLDs are enforceable under a concrete scheduling policy or jointly schedulable at the system level. In this paper we propose the first machine-learning-based JLD synthesis method, built around a two-level Graph Neural Network with temperature-controlled sampling that learns the structural patterns connecting cause-effect chain configurations to their JLD solutions. Since learned outputs may not be correct by construction, we embed the GNN in a novel Generate-and-Verify architecture in which a  safe DP data-age checker, together with a per-chain EDF feasibility checker and a system-level demand-bound test, accept or reject each candidate.  We show that the ML-based generator substantially outperforms the original greedy heuristic  while achieving orders-of-magnitude lower synthesis time, demonstrating that learned structural priors can effectively replace exponential propagation-tree enumeration on this class of real-time scheduling problems.
\end{abstract}

\section{Introduction}
\label{sec:intro}
\sloppypar
Modern automotive systems are shifting from a decentralized ECU architecture to a more centralized and zonal software-defined approach in which hundreds of mixed-criticality functions, like Advanced Driver Assistance Systems (ADAS), Assisted Driving (AD), infotainment, etc., run on a shared high-performance multi-core multi-SoC platform~\cite{niedrist, mclean2020mapping}. In such integrated automotive platforms, ADAS functions must satisfy not only strict real-time deadline requirements but also guarantee complex dependencies among sensors, control software, and actuators. A typical driver-assistance function, for example, gathers data from cameras and distance sensors (e.g., ultrasonic sensors and LiDAR), combines them in a sensor-fusion layer, and passes the resulting information to navigation and control algorithms that generate actuator commands (e.g., emergency braking). This execution sequence forms a temporal dependency chain that must respect timing constraints on the freshness of the data. In automotive systems, such dependencies are commonly modeled as \emph{cause-effect chains}, that is, sequences of periodic tasks through which data propagates from a sensor, across multiple processing tasks, to an actuator~\cite{becker2016synthesizing, becker2018endtoend}. Each chain is subject to a \emph{data age} constraint, which bounds the maximum latency between sampling at the source and the point at which the actuator output reflects that input. In multi-rate chains, where tasks execute at different frequencies, the worst-case data age depends on the particular job instances that communicate, making the analysis non-trivial~\cite{grus2025packing, grus2025chains}. Ensuring both functional interdependence and compliance with these end-to-end timing constraints therefore requires careful scheduling of ADAS functions.

Becker et al. proposed a schedule-agnostic solution for enforcing chain data-age constraints in the form of \emph{Job-Level Dependencies} (JLDs)~\cite{becker2016synthesizing}. As shown in~\cite{becker2016synthesizing}, a consumer reads whichever value is currently in the shared register under implicit communication (``last-is-best''). Without constraints, the worst-case data age can far exceed the budget. A JLD $\jld{i}{j}{k}{l}$ enforces that job~$k$ of $\task_i$ must complete before job~$l$ of $\task_j$ begins. This prunes paths in the data propagation tree that would cause excessive latency, bounding the worst-case data age independently of the concrete schedule dispatch. Hence, given a set of correct JLDs, a scheduler (either an offline created TT schedule~\cite{FinziRTAS24} or a runtime algorithm such as EDF/FP~\cite{liu73}) only need to enforce the JLDs to make the schedule correct in terms of chain data ages. While JLDs are effective for data age reduction, they have two fundamental limitations. First, synthesizing the JLD set requires exponential time in chain length due to full propagation tree enumeration~\cite{becker2016synthesizing, 10.1145/3177803.3177805}. Secondly, the Greedy-JLD heuristic by Becker et~al.~\cite{becker2016synthesizing} does not check whether the produced JLDs can be \emph{enforced by a scheduler}. Each JLD required for satisfying the data age constraint of a chain may lead to either an unenforceable constraint or an infeasible schedule. 

In this paper, we address the problem of synthesizing enforceable and schedulable JLDs for cause-effect chains in EDF-scheduled systems. We propose a \emph{Generate-and-Verify} architecture that integrates JLD candidate generation with verification in a synthesis loop, as well as two novel methods for generating JLDs. We propose a randomized variant of the Greedy-JLD heuristic which allows multiple JLDs per pair, called \textit{GreedyRand-JLD} and a ML-based generator with temperature-controlled sampling called \textit{GNN-JLD}. The intuition behind the ML-based variant is that cause-effect chains contain patterns related to period structure, relations and transitions, as well as utilization and chain structure to allow a learning-based approach to gain meaningful information that allows an efficient generation of JLDs even for unseen cases. Since the ML-generated JLDs are not guaranteed to be correct, the generate-and-verify architecture makes sure that the JLDs result in correct data-ages and are also EDF-enforceable and the resulting system schedulable under EDF. We evaluate both the GreedyRand-JLD and GNN-JLD against a random generator baseline and the Greedy-JLD heuristic without the generate-and-verify loop on 1,500 use-cases spanning use-cases from the training set, similar use-cases as in the training set, and unknown use-cases in terms of period sets. Both methods substantially outperform the original Greedy-JLD heuristic. GreedyRand-JLD achieves $80\%$ fully valid use-cases compared to $64\%$ for single-attempt Greedy-JLD, while the GNN-based generator further raises this to $89\%$ at a median per-chain cost of $6$--$8$\,ms ($3$--$5\times$ faster than heuristics at the median and over $100\times$ faster in the tail).

In Sec.~\ref{sec:background} we present the system model and data age analysis from~\cite{becker2016synthesizing} and describe the EDF schedulability enforcement in Sec.~\ref{sec:enforcement}. Sec.~\ref{sec:generate_and_verify} introduces the Generate-and-Verify architecture, the DP data age checker, and the GreedyRand-JLD generator.
Sec.~\ref{sec:gnn} presents the ML-based approach, including features sets and training method. We evaluate the approaches in Sec.~\ref{sec:evaluation}, discuss related work in  
Sec.~\ref{sec:related} and conclude in 
Sec.~\ref{sec:conclusion}.

\section{Background and System Model}
\label{sec:background}

\subsection{Task and Chain Model}
\label{sec:task-model-chain-model}
We consider a set of periodic tasks $\Gamma = \{\task_1, \ldots, \task_n\}$ on a single processor under preemptive EDF. We note that in modern multi-core automotive systems, safety critical functions are usually pre-assigned to cores and schedulability is handled in a fully partitioned manner. The task-to-core assignment has been studied for automotive systems~\cite{mclean2020mapping} and is not in scope of this work. We also note that a (semi-)partitioned approach can achieve near-optimal schedulability compared to a global scheduling approach~\cite{7809847}. 

Each task $\task_i$ is defined by a period $T_i$ and a WCET $\wcet_i$. In this work we assume implicit deadlines to be able to compare to Becker et~al.~\cite{becker2016synthesizing}\footnote{We note that supporting constrained-deadlines ($D_i \le T_i$) is trivial and already integrated in our ML model (via the deadline ratio feature) but the current training data and benchmark focus on the implicit deadline case in order to be able to faithfully compare to~\cite{becker2016synthesizing}.}. The $j$-th job of $\task_i$, denoted $\job{i}{j}$, is therefore released at $(j{-}1) T_i$ and must complete by $j \cdot T_i$. We denote the hyperperiod with $\hp = \mathrm{lcm}(T_1, \ldots, T_n)$.

A \emph{cause-effect chain} $\chain = \task_{s_1} \to \task_{s_2} \to \cdots \to \task_{s_m}$ is a sequence of tasks through which data propagates. Each chain has a maximum data age requirement~$B$. In~\cite{becker2016synthesizing, becker2018endtoend} four timing bounds (for implicit deadlines) are defined for each job $\job{i}{j}$:
{\small\begin{align}
    \rmin(\job{i}{j}) &= (j{-}1) \cdot T_i &
    \rmax(\job{i}{j}) &= \rmin(\job{i}{j{+}1}) - \wcet_i \nonumber \\
    \dmin(\job{i}{j}) &= \rmin(\job{i}{j}) + \wcet_i &
    \dmax(\job{i}{j}) &= \rmax(\job{i}{j{+}1}) + \wcet_i \nonumber
\end{align}}

The \emph{read interval} $[\rmin,\rmax]$ is the window during which a job may begin reading its input data. The \emph{data interval} $[\dmin,\dmax]$ describes the time during which the job's output may be visible before it is overwritten by the next instance. The bounds from~\cite{becker2016synthesizing,becker2018endtoend} assume delayed publication no earlier than $\rmin+\wcet$, which is sound under enforced delayed output or fixed execution time (instead of WCET) of every periodic task. With immediate publication and executions shorter than WCET, $\dmin=\rmin+\wcet$ is optimistic and may miss feasible propagation paths\footnote{We note that our method is independent of the publication semantic and works with $\dmin=\rmin$ or $\dmin=\rmin + BCET$ or with LET semantics~\cite{Kirsch2012}. Expanding the method to such semantics is trivial and just modifies the $\dmin$ formula.}. A job $ \job{k}{l}$ can consume data produced by $\job{i}{j}$ iff their read and data intervals overlap. For chains of length $>2$, the data interval requires a forward adjustment~\cite{becker2018endtoend}:
\begin{equation}
\dmin'(\job{k}{l}, \job{i}{j}) = \max(\dmin(\job{k}{l}),\; \dmin(\job{i}{j}) + \wcet_k)
\label{eq:dpmin}
\end{equation}
Note that Eq.~\ref{eq:dpmin} is applied iteratively along a path: at each level the carried $\dpmin$ replaces $\dmin(\job{i}{j})$ in the formula, so that the cascading correctly accumulates over multiple hops.

The function $\text{maxAge}(\task_{root}, \task_{leaf}, dpt)$ returns the maximum latency of the path from $\task_{root}$ to $\task_{leaf}$, where both root and leaf job are part of the data propagation tree $dpt$~\cite{becker2018endtoend}:
$\text{maxAge}(\task_{root}, \task_{leaf}, dpt) = (\rmax(\task_{leaf}) + \wcet_{leaf}) - \rmin(\task_{root}).$ The global maximum data age of a cause-effect chain, denoted $l_{\max}$, considers all root/leaf combinations across all data propagation trees in $\mathcal{T}$~\cite{becker2018endtoend, becker2016synthesizing}:
  \begin{equation}
      l_{\max} = \max_{\forall\, \task_{root}, \task_{leaf} \in dpt,\; \forall\, dpt \in \mathcal{T}} \text{maxAge}(\task_{root}, \task_{leaf}, dpt) \nonumber
  \end{equation}
where $\mathcal{T}$ is the set of all data propagation trees of chain~$\chain$, one rooted at each job of the first task within one hyperperiod.

\subsection{Greedy-JLD Heuristic and Its Limitations}
\label{sec:becker}
The Greedy-JLD heuristic (Algorithm~3 of~\cite{becker2016synthesizing}) iteratively builds the data propagation tree, finds an invalid path (data age $> B$), and adds one JLD to prune the path. The MECHAniSer reference implementation~\cite{becker2016mechaniser} simplifies the JLD-placement step to a free-pair scan. i.e., it walks the offending path from leaf to root and inserts the JLD at the first $(pred_{task}, succ_{task})$ pair that does not already carry a constraint, exploiting the algorithm's invariant of at most one JLD per task pair. If every pair on the path already carries a JLD, the algorithm cannot add another and reports the chain as over-restricted.

The generation cost is exponential in chain length: the search space has $\prod_{i=1}^{m-1} b_i$ candidate paths, where $b_i$ is the branching factor at edge $i$, bounded by the number of successor jobs whose intervals overlap a given predecessor's data window~\cite{becker2018endtoend}. MECHAniSer uses a greedy depth-first search with backtracking rather than full tree enumeration, but the worst-case complexity remains exponential. Moreover, the one-JLD-per-pair restriction limits how finely the algorithm can prune, i.e.,  each JLD blocks an entire subtree, and on longer or non-harmonic chains the few coarse-grained cuts can disconnect the chain entirely (no propagation path survives), eliminating valid latencies before the budget is met. 

Another significant limitation of the Greedy-JLD heuristic~\cite{becker2016synthesizing} is that it does not check enforceability under a specific scheduling mechanism (e.g.\ FP or EDF), nor does it perform a task set transformation that would enforce the JLDs under the specific scheduling algorithm, and it also does not check system-wide schedulability of the transformed task set under the given scheduling discipline.

\section{Schedule Enforceability}
\label{sec:enforcement}
In this section we show how to enforce JLDs in an EDF-scheduled system via task-set transformation and how to check the system-level feasibility using the transformed task set.

A JLD set that satisfies the data-age is not necessarily schedulable. Under EDF, enforcing a job precedence $A \to B$ requires that (i) $\phi_A \leq \phi_B$, i.e., the predecessor is released no later than the successor, and (ii) $\mathrm{dl}(A) < \mathrm{dl}(B)$, i.e., the predecessor has a strictly smaller absolute deadline, ensuring that EDF dispatches it first when both are pending.

We use a transformation similar to~\cite{338100, 896010, CraciunasETFA14}. Each task $\task_i$ involved in any JLD is replaced by $\hp / T_i$ virtual tasks, one for each job instance in the system hyperperiod $\hp$. The $j$-th virtual task $\task_{i,j}^v$ is defined by the tuple $\langle \phi_{i,j}, D_i, \wcet_i, \hp \rangle$, where $\phi_{i,j} = (j{-}1) T_i$ is the offset, $D_i$ is the relative deadline, $\wcet_i$ is the WCET, and the period is set to $\hp$. Tasks not involved in any JLD retain their original parameters. Note that the total utilization per original task is preserved by construction.

For each JLD pair $(A, B)$ between virtual tasks, we adapt the precedence enforcement rules from~\cite{chetto1990dynamic} to the case of periodic virtual tasks. Depending on the offset and deadline relationship between $A$ and $B$, four cases arise:
\begin{itemize}[leftmargin=2em]
\item[C.1] $\phi_A \leq \phi_B$ and $\mathrm{dl}(A) < \mathrm{dl}(B)$: the constraint is already satisfied; no modification is needed.
\item[C.2] $\phi_A \leq \phi_B$ and $\mathrm{dl}(A) \geq \mathrm{dl}(B)$: we pull $\mathrm{dl}(A)$ to $\mathrm{dl}(B){-}1$ or push $\mathrm{dl}(B)$ to $\mathrm{dl}(A){+}1$. We prefer pulling the predecessor deadline as it is less disruptive to subsequent successors; if doing so would violate $D'_A \geq \wcet_A$, we push the successor deadline instead.
\item[C.3] $\phi_A > \phi_B$ and $\mathrm{dl}(A) < \mathrm{dl}(B)$: we push $\phi_B$ to $\phi_A$, preserving the original absolute deadline of~$B$.
\item[C.4] $\phi_A > \phi_B$ and $\mathrm{dl}(A) \geq \mathrm{dl}(B)$: we first push $\phi_B$ to $\phi_A$, and then resolve the deadline conflict as in case C.2.
\end{itemize}

Every modification must keep the absolute deadline within its original release-to-deadline window and respect $D' \geq \wcet$, otherwise the JLD is wrong. When multiple JLDs form chains of precedences (e.g., $A \to B \to C$), we build a DAG over all dependent virtual-task pairs, topologically sort it, and apply the four cases in forward order (offset push followed by deadline ordering) and a backward pass then tightens the deadlines as needed. Cycles in the dependency DAG are detected during the topological sort and cause rejection of the JLD set.

Note that a JLD between two virtual tasks whose execution windows do not overlap is either \emph{trivially satisfied}, when the successor releases after the predecessor deadline, or \emph{impossible}, when the successor must finish before the predecessor can start. We discard the former (no constraint needs to be enforced) and reject the latter (the JLD set is infeasible).

After the transformation, system-level EDF schedulability is checked over the \emph{complete} task set (virtual tasks + non-JLD chain tasks + tasks that are not in any chain) by the demand-bound schedulability test from~\cite{baruah1990algorithms, pellizzoni2005feasibility}:
\begin{align}
    & \forall t_1 {\in} \Lambda,\; \forall t_2 {\in} \Delta,\; t_1 < t_2: \nonumber \\
    & \sum_{k=1}^{n} \wcet_k \!\left(\!\left\lfloor \frac{t_2 {-} \phi_k {-} D_k}{T_k} \right\rfloor {-} \left\lceil \frac{t_1 {-} \phi_k}{T_k} \right\rceil {+} 1\!\right)_{\!0} \!\leq t_2 {-} t_1
\end{align}
where $(x)_0 = \max(0, x)$, and the test point sets are defined up to $\Phi + 2 \cdot \hp$ with $\Phi = \max_i(\phi_i)$:
\begin{align}
    \Lambda &= \{\phi_i + j T_i \mid i{=}1,\ldots,n;\; j {\geq} 0;\; \phi_i {+} j T_i \leq \Phi + 2 \hp\} \nonumber \\
    \Delta  &= \{a_{i,j} + D_i \mid a_{i,j} \in \Lambda;\; a_{i,j} {+} D_i \leq \Phi + 2 \hp\} \nonumber
\end{align}

\section{Generate-and-Verify}
\label{sec:generate_and_verify}
We address the problem of synthesizing enforceable and schedulable JLDs for cause-effect chains in EDF-scheduled systems via a novel \emph{Generate-and-Verify} architecture that combines JLD candidate generation with verification within a synthesis loop (cf.\ Algorithm~\ref{alg:gen-verify}). The key idea is to decouple candidate generation from correctness checking: a \emph{candidate generator} $\mathcal{G}$ proposes a JLD set for each chain, and each candidate is then passed through a two-tier verifier consisting of (i) the data-age checker (defined below), which verifies that the worst-case age satisfies the chain's budget, and (ii) the per-chain EDF feasibility checker, which verifies that the resulting precedence constraints are physically enforceable under EDF using the transformation described in Section~\ref{sec:enforcement}.

A candidate is accepted only if both checks pass and otherwise the generator produces the next candidate. The loop runs up to $K$ attempts per chain. If a chain produces a passing candidate, that candidate's JLD set is stored for the system-level check. If all $K$ attempts fail (no candidate satisfies DA + per-chain EDF in isolation), the loop retains the \emph{best-effort} candidate seen (ranked first by verification tier DA + EDF $\succ$ DA-only $\succ$ neither, then by lower max-age, then by fewer JLDs). The chain is still marked unsatisfied in isolation, but its best-effort JLDs participate in the cross-chain analysis below. All methods follow the same accounting to ensure fair comparisons across methods.

After all chains have a candidate (passing or best-effort), a post-combine data-age recheck verifies each chain against the combined per-method union: cross-chain JLDs touching tasks shared between chains tighten adjusted intervals on those tasks (push $R_{\min}$ forward, pull $R_{\max}$ back), so an isolated DA failure can become satisfied under the combined set, and an isolated DA pass can in principle be invalidated. Then, the task set is transformed and the \emph{system-level} EDF demand bound test checks whether the combined JLD sets from all chains are jointly schedulable on a single core (cf. Section~\ref{sec:enforcement}). If this test fails, the entire per-chain loop is restarted with modified generator parameters for up to $S$ system retries. One valid candidate must be produced within K attempts for at least one of the S generator settings. This separation allows fast, approximate generators to be paired with verifiers, combining speed with correctness guarantees.

\begin{algorithm}[t]
\caption{Generate-and-Verify}
\label{alg:gen-verify}
\KwData{$c$ chains $\chain_1, ..., \chain_c$, tasks $\Gamma$, budgets $B_1, ..., B_c$, generator $\mathcal{G}$}
\KwResult{JLD sets $\Psi_1, \ldots, \Psi_c$ or \textsc{Infeasible}}

\For{each system retry $s = 1, \ldots, S$}{
    \For{each chain $\chain_i$}{
        \If{$\widehat{\emph{age}}(\chain_i, \Gamma, \emptyset) \leq B_i$}{
            $\Psi_i \gets \emptyset$;\ $\mathit{accepted}_i \gets \textsc{True}$\;
            \textbf{continue}; \tcp{No JLDs needed}
        }
        $\Psi_i \gets \emptyset$;\ $\mathit{accepted}_i \gets \textsc{False}$;\ $\Psi_i^{\mathrm{best}} \gets \emptyset$\;
        \For{each attempt $k = 1, \ldots, K$}{
            $\Psi \gets \mathcal{G}(\chain_i, \Gamma, k)$\;
            $\mathit{da}_k \gets \widehat{\emph{age}}(\chain_i, \Gamma, \Psi) \leq B_i$\;
            $\mathit{edf}_k \gets \textsc{EdfFeasible}(\chain_i, \Gamma, \Psi)$\;
            \If{$\mathit{da}_k$ \textbf{and} $\mathit{edf}_k$}{
                $\Psi_i \gets \Psi$;\ $\mathit{accepted}_i \gets \textsc{True}$\;
                \textbf{break}; \tcp{Candidate accepted}
            }
            \tcp{retain best-effort candidate}
            $\Psi_i^{\mathrm{best}} \gets \textsc{BestOf}(\Psi, \mathit{da}_k, \mathit{edf}_k,\; \Psi_i^{\mathrm{best}})$\;
            
        }
        \If{$\neg\,\mathit{accepted}_i$}{
            $\Psi_i \gets \Psi_i^{\mathrm{best}}$; \tcp{Carry best-effort into union}
        }
    }
    \tcp{Post-combine recheck against union $\Psi_1 \cup \cdots \cup \Psi_c$}
    \For{each chain $\chain_i$}{
        $\mathit{accepted}_i \gets \widehat{\emph{age}}(\chain_i, \Gamma, \bigcup_j \Psi_j) \leq B_i \;\textbf{and}\; \textsc{EdfFeasible}(\chain_i, \Gamma, \Psi_i)$\;
    }
    \If{$\bigwedge_i \mathit{accepted}_i$ \textbf{and} $\textsc{SystemEdf}(\Gamma,\, \Psi_1 \cup \cdots \cup \Psi_c)$}{
        \Return $\Psi_1, \ldots, \Psi_c$; \tcp{All checks passed}
    }
}
\Return \textsc{Infeasible}\;
\end{algorithm}

\subsection{DP Data-Age Checker}
Greedy-JLD~\cite{becker2016synthesizing} is exponential in the chain length $m$. Checking the worst-case data age for a given JLD set can be done in polynomial time with respect to the hyperperiod length. We argue that this complexity is acceptable for three reasons: (i) in real automotive systems, task periods are predominantly harmonic and hyperperiod explosion rarely occurs in practice (see~\cite{kramer2015waters, FinziRTAS24}), keeping $N = \hp/T_{\min}$ bounded and the test tractable, (ii) cause-effect chains in such systems can be very long, and (iii) every method for checking the feasibility of task sets with JLDs under EDF, including the Greedy-JLD heuristic itself, inherits the same complexity cost from the EDF demand-bound test~\cite{baruah1990algorithms, pellizzoni2005feasibility}. Hence, instead of paying both an exponential cost in chain length (during JLD synthesis) and an exponential cost in hyperperiod length (during EDF feasibility), we reduce the synthesis cost to the latter. Moreover, even when periods are not harmonic and $N$ grows large, several methods exist to transform the periods to harmonic variants~\cite{6932603, 10.5555/3288651.3288734, 7176034}.

\begin{algorithm}
\caption{DP Data-Age Checker}
\label{alg:dp-V2}
\KwData{chain $\chain = (\task_1, \ldots, \task_m)$, set of JLDs}
\KwResult{worst-case data age $l_{\max}$}

Compute $\hp = \mathrm{lcm}(T_1,\ldots,T_m)$ and $\WCL = \sum_{i=1}^{m}2T_i$\;
Set $H = \bigl(\lceil \WCL/\hp\rceil + 1\bigr)\cdot\hp$ and generate $\bar{N}_i=\max\{2,H/T_i\}$ jobs per task\;
Compute $[\rmin, \rmax]$, $[\dmin, \dmax]$ for each job; \tcp{initialise: one state per root job}
\For{each root job $j \in \{1, \ldots, \hp/T_1\}$}{
    $\mathcal{D}_{1,j} \gets \{\dmin(\tau_{1,j})\}$; $\mathcal{S}_{1,j,\dmin(j)} \gets \rmin(j)$\;
} 
\For{$\ell = 1, \ldots, m - 1$}{
    \For{each root job $j \in \{1, \ldots, \hp/T_\ell\}$}{
        \For{each $d' \in \mathcal{D}_{\ell, j}$}{
            \For{each each job of task $\tau_{\ell+1, s}$}{
                \If{$\rmax(\tau_{\ell+1, s}) \ge d'$ \textbf{and}   $\rmin(\tau_{\ell+1, s}) < \dmax(\tau_{\ell, j})$   
                    \textbf{and} $\neg\textsc{Blocked}(\tau_{\ell, j}, \tau_{\ell+1, s})$}{
                    \tcp{cascade $\dpmin$ (unified rule)}
                    $d'_{\mathrm{next}} \gets \max\!\bigl(d',\;\rmin(\tau_{\ell+1, s})\bigr) + \wcet_{\tau_{\ell+1, s}}$\;
                    \If{$\mathcal{D}_{\ell+1, s}$ is undefined} {
                        $\mathcal{D}_{\ell+1, s} \gets \{d'_{next}\}$;
                    } \Else {
                        $\mathcal{D}_{\ell+1, s} \gets \mathcal{D}_{\ell+1, s} \cup \{d'_{\mathrm{next}}\}$;\\
                    }
                    \If{$\mathcal{S}_{\ell+1, s, d'_{next}}$ is undefined}{
                        $\mathcal{S}_{\ell+1, s, d'_{next}} \gets S_{\ell, j, d'}$;
                    } \Else{
                        $\mathcal{S}_{\ell+1, s, d'_{next}} \gets \min\{\mathcal{S}_{\ell+1, s, d'_{next}}, S_{\ell, j, d'}\}$              
                    }
                }
                
            }
        }
    }
    \textsc{DominancePrune}$(\ell+1)$\;
}

\Return $\max_{s=1,2\ldots,HP/T_m}\max_{d \in \mathcal{D}_{m, s}} (\rmax(\tau_{m,s})+C_{m,s} -S_{m,s,d})$;\tcp{wc age at sink}
\end{algorithm}

We can compute $[\rmin, \rmax]$, $[\dmin, \dmax]$ for each job up to the interest of interval. For brevity, we denote them as $\rmin(\tau_{i,j}), \rmax(\tau_{i,j}), \dmin(\tau_{i,j}), \dmax(\tau_{i,j})$ by also dropping the virtual task notion. We note that the data propagation model assumes that each job is enforced delayed output or fixed execution time (instead of WCET) of every periodic task, detailed in Section~\ref{sec:task-model-chain-model}.

Let $\mathcal{D}_{\ell,j}$ be the set of earliest output times of job $\tau_{\ell,j}$. In our initial condition, we have $\mathcal{D}_{1,j} = \{\dmin(\tau_{1,j})\}$ for all $j=1,2,\ldots,\bar{N}_1$. We know that if a job $\tau_{\ell,j}$ can produce its output the earliest at time $d_{\ell,j}$, then a subsequent job of $\tau_{\ell+1,s}$ in the chain after reading this input can produce its output earliest at time $\max\{\rmin(\tau_{\ell+1,s}), d_{\ell,j}\}+C_{\ell+1,s}$. Formally, if a job $\tau_{\ell,j}$ produces its output earliest at time $d' \in \mathcal{D}_{\ell, j}$, then the data is read by another job $\tau_{\ell+1,s}$ earlist at time $\max\{d', \rmin(\tau_{\ell+1,s})$  if 1) $\rmax(\tau_{\ell+1, s}) \ge d'$, and 2) $\rmin(\tau_{\ell+1, s}) < \dmax(\tau_{\ell, j})$, and 3) $\tau_{\ell+1, s}$ does not block $\tau_{\ell, j}$. Furthermore, in this case, job $\tau_{\ell+1,s}$ can produce its output at earlist with $\max\{\rmin(\tau_{\ell+1,s}), d'\}+C_{\ell+1,s}$ in our data propagation model. 

Therefore, by using $\mathcal{D}_{\ell,j}$ for $j=1,2,\ldots,\bar{N}_\ell$ and the above conditions, we can build $\mathcal{D}_{\ell+1,s}$ for $s=1,2,\ldots,\bar{N}_{\ell+1}$.

Let $S_{\ell+1,s,d}$ be the earliest read time of the corresponding job of task $\tau_1$, whose output can be propagated to the job $\tau_{\ell+1, s}$  under the constraint that $\tau_{\ell+1, s}$  only produces its output earliest at time $d$. According to the definition, for every job $\tau_{1,j}$ with $d=\dmin(\tau_{1,j})$, we can initialize 
$S_{1,j,\dmin(\tau_{1,j})} = \rmin(\tau_{1,j}).$ 
For a given tuple of $\ell+1, s, d$ with $d \in \mathcal{D}_{\ell+1, s}$, let $\mathcal{J}_{\ell+1, s, d}$ be the tuple of $(\tau_{\ell, j}, d')$, which were used to define $\mathcal{D}_{\ell+1, s}$. By definition, $\mathcal{J}_{\ell+1, s, d} \neq \emptyset$. We define 
\begin{equation}
    \label{eq:recursive-S}
S_{\ell+1,s,d} = \min_{(\tau_{\ell,j}, d') \in \mathcal{J}_{\ell+1, s, d}}\{S_{\ell, j, d'}\}.  \end{equation}

\begin{lemma}
    \label{lemma:recursive}
    Suppose that task $\tau_{\ell+1, s}$ produces output the earliest at time $d \in \mathcal{D}_{\ell+1,s}$. 
    The longest valid data path from $\tau_1$ to $\tau_{\ell+1, s}$ producing the output at time $d$ is at most $d-S_{\ell+1,s,d}$.
\end{lemma}
\begin{proof}
    We prove this by induction. For $\ell=1$, this holds by construction. Suppose that this condition holds for the task $\tau_{\ell,j}$ for $j=1,2,\ldots,\bar{N}_{\ell}$. We now prove by the induction hypothesis that this also holds for $\tau_{\ell+1,s}$ for $s=1,2,\ldots,\bar{N}_{\ell+1}$.

    Suppose that there is a feasible data propagation path from a job $\tau_{1,k}$ to $\tau_{\ell+1, s}$ in which the job $\tau_{\ell+1, s}$ reads its input \emph{exactly} at time $d^* = d-C_{\tau_{\ell+1}, s}$. According to the definition of $\mathcal{D}_{\ell, j}$, we know that $d^* \in \mathcal{D}_{\ell, j}$ for some $j=1,2,\ldots,\bar{N}_{\ell}$. Suppose for contradiction that the time $\tau_{1,k}$ reads its input is \emph{less than} $S_{\ell+1,s,d}$.  Let $\tau_{\ell,j}$ be the job of task $\tau_{\ell}$ which produces its output the earliest at time $d^*$. Based on Eq.~\eqref{eq:recursive-S}, this contradicts the assumption that the longest valid data path from $\tau_1$ to $\tau_{\ell, j}$ producing the output at time $d^*$ is at most $d^* - S_{\ell,j,d^*}$.
\end{proof}

The above lemma assumes that $d\in \mathcal{D}_{\ell,s}$. We now show that such points are indeed sufficient. 

\begin{lemma}
    \label{lemma:dominance-1}
    Suppose that $\tau_{\ell+1, s}$ produces its output at time $d'$. Then a valid data propagation path from $\tau_1$ to $\tau_{\ell+1, s}$ is at most $\max_{d \in \mathcal{D}_{\ell+1,s}: d \leq d'} d' -S_{\ell+1,s,d}$.
\end{lemma}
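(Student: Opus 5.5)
We prove Lemma~\ref{lemma:dominance-1} by reducing an arbitrary output time $d'$ of $\tau_{\ell+1,s}$ to one of the recorded \emph{earliest} output times in $\mathcal{D}_{\ell+1,s}$, and then invoking Lemma~\ref{lemma:recursive}. Fix a valid data propagation path $\tau_{1,k}\to\cdots\to\tau_{\ell,j}\to\tau_{\ell+1,s}$ in which $\tau_{\ell+1,s}$ outputs at time $d'$. Its length is $d'$ minus the read time of $\tau_{1,k}$, so it suffices to show that this read time is at least $S_{\ell+1,s,d}$ for some $d\in\mathcal{D}_{\ell+1,s}$ with $d\le d'$.

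\emph{Step 1 (monotone earliest-time path).} Along the fixed path, define the earliest output time of each hop by the cascade of Eq.~\eqref{eq:dpmin}. Start with $e_1=\dmin(\tau_{1,k})$ and set $e_{i+1}=\max(e_i,\rmin(\tau_{i+1,\cdot}))+\wcet_{i+1}$. We show by induction on the hop index that $e_i\le$ the actual output time of the $i$-th job on the path. This holds because the actual read time of a job is at least both $\rmin$ and the actual predecessor output time, and the map $x\mapsto \max(x,r)+C$ is monotone.

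\emph{Step 2 (the path survives in the DP).} We show by induction that $e_i\in\mathcal{D}_{i,\cdot}$ for the corresponding job. The three admission conditions in Algorithm~\ref{alg:dp-V2} must hold with $e_i$ in place of the actual output time. Condition 1, $\rmax\ge e_i$, holds because $e_i$ is at most the actual output time, and the successor reads that output within its read interval. Condition 2, $\rmin<\dmax$, depends only on the job pair, and the pair already overlaps on the valid path. Condition 3, non-blocking, also depends only on the job pair. Hence the DP inserts $e_{\ell+1}=:d$ into $\mathcal{D}_{\ell+1,s}$, with $d\le d'$ by Step~1. By Eq.~\eqref{eq:recursive-S} and induction, $S_{\ell+1,s,d}\le\rmin(\tau_{1,k})$, which is the read time of the root job. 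Therefore the path length is at most $d'-S_{\ell+1,s,d}$, and taking the maximum over admissible $d$ gives the claim.

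\emph{Main obstacle: \textsc{DominancePrune}.} Pruning could delete the exact $d$ found in Step~2. We handle this by arguing that a pair $(d,S)$ is pruned only if some retained $(\hat d,\hat S)$ satisfies $\hat d\le d$ and $\hat S\le S$. This is what we expect the dominance rule to guarantee. The retained pair then still satisfies $\hat d\le d'$ and gives the weaker, still sufficient bound $d'-\hat S\ge d'-S$. Because admission conditions~1 and~3 and the cascade are monotone in $d$, whatever $d$ would have propagated downstream is also propagated by $\hat d$. The induction in Step~2 therefore goes through with the dominating pair in place of the pruned one.
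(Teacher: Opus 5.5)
Your proof is correct and follows the same route as the paper: you replace the actual output time $d'$ by an earliest output time $d\le d'$ that is recorded in $\mathcal{D}_{\ell+1,s}$, and you bound the root's read time from below by $S_{\ell+1,s,d}$ (the paper cites Lemma~\ref{lemma:recursive} for this, while you run the induction on Eq.~\eqref{eq:recursive-S} directly). The paper only asserts $d\in\mathcal{D}_{\ell+1,s}$, whereas you justify it: you define $d$ through the cascade, check the three admission conditions hop by hop, and show that \textsc{DominancePrune} only replaces a pair $(d,S)$ by a retained $(\hat d,\hat S)$ with $\hat d\le d$ and $\hat S\le S$, which makes rigorous the informal remark the paper gives after Theorem~\ref{thm:dp-optimality}.
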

\begin{proof}
    Suppose that this valid data propagation path $\tau_{\ell+1, s}$ reads its input at time $r$ and can produce its output the earliest at time $d = r+C_{\ell,s}$ with $d \leq d'$. Furthermore, $d \in \mathcal{D}_{\ell+1,s}$. According to Lemma~\ref{lemma:recursive}, we know that this data propagation path has a length of at most $d'- S_{\ell+1,s,d}$, as concluded in the statement of the lemma.
\end{proof}

\begin{theorem}
    \label{thm:dp-optimality}
   The longest valid data path from $\tau_1$ to $\tau_m$ is at most $\max_{s=1,2\ldots,HP/T_m}\max_{d \in D_{m, s}} (\rmax(\tau_{m,s})+C_{m,s} -S_{m,s,d})$.
\end{theorem}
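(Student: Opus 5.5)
The plan is to reduce Theorem~\ref{thm:dp-optimality} to Lemma~\ref{lemma:dominance-1} applied at the last stage $\ell+1=m$, and then take a maximum over all sink jobs.

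\textbf{Step 1: bound a single path.} Fix an arbitrary valid data propagation path from some root job $\tau_{1,k}$ to a sink job $\tau_{m,s}$. Here valid means it respects the overlap conditions and is not blocked by any JLD. By definition of maxAge, its length is at most $(\rmax(\tau_{m,s})+C_{m,s})-\rmin(\tau_{1,k})$. The sink job outputs no later than $d'=\rmax(\tau_{m,s})+C_{m,s}$, since it starts reading by $\rmax$ and runs at most $C_{m,s}$.

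\textbf{Step 2: locate the path's earliest output time in the DP.} Follow the path's read times forward through the chain. By induction on $\ell$, the path's earliest output time $d$ at stage $\ell$ satisfies $d\in\mathcal{D}_{\ell,\cdot}$. The base case is the initialization $\mathcal{D}_{1,j}=\{\dmin(\tau_{1,j})\}$. The inductive step holds because each hop on a valid path satisfies exactly the three admissibility conditions checked in Algorithm~\ref{alg:dp-V2}, and the cascade rule $\max(d',\rmin)+\wcet$ is applied. Hence $d\in\mathcal{D}_{m,s}$ and $d\le d'$. Lemma~\ref{lemma:dominance-1} with $d'=\rmax(\tau_{m,s})+C_{m,s}$ then bounds the path length by $\max_{d\in\mathcal{D}_{m,s}} (\rmax(\tau_{m,s})+C_{m,s}-S_{m,s,d})$. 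Dropping the restriction $d\le d'$ only enlarges the maximum.

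\textbf{Step 3: maximize over sinks.} Take the maximum over all sink jobs $s$. This needs the observation that the sink jobs indexed $1,\ldots,\hp/T_m$, together with the unrolled horizon $H$ (at least $\WCL$ beyond one hyperperiod), cover every path up to hyperperiod periodicity. Shifting a path by a multiple of $\hp$ shifts all of $\rmin,\rmax,\dmin,\dmax$ equally and preserves its length. The horizon $H$ guarantees that a path rooted in the first hyperperiod reaches the sink before the generated jobs run out, since each hop adds at most $2T_i$.

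\textbf{Main obstacle.} I expect the hardest part to be the handling of \textsc{DominancePrune}, together with the periodicity/indexing mismatch in Step 3. For \textsc{DominancePrune}, I must show that a pruned pair $(d,S)$ is dominated by a retained pair $(\tilde d,\tilde S)$ with $\tilde d\le d$ and $\tilde S\le S$. Such a pair yields a bound at least as large at every later stage, because the successor conditions are monotone in $d$ and $S$ propagates by $\min$. I would first try to prove this monotonicity as an auxiliary claim and invoke it inside the induction of Step 2.
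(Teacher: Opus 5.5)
Your Steps 1--2 are the paper's argument. The paper's proof applies Lemmas~\ref{lemma:recursive} and~\ref{lemma:dominance-1} at stage $m$ with the output bound $\rmax(\job{m}{s})+\wcet_m$, sink by sink. Your monotonicity plan for \textsc{DominancePrune} is sound and more careful than the paper's informal remark after the theorem: a retained pair $(\tilde d,\tilde S)$ with $\tilde d\le d$, $\tilde S\le S$ reaches a superset of successors, never yields a later cascaded output, and never yields a larger $S$.

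The gap is in Step~3. You claim that sinks with index $s\le \hp/T_m$ ``cover every path up to hyperperiod periodicity,'' and this is false. Shifting by multiples of $\hp$ lets you move either the root or the sink into the first hyperperiod, but not both. The DP seeds roots only at $t\ge 0$, so its states are not $\hp$-periodic in $s$, even though the job parameters are. Your horizon argument normalizes the root, and then the sink can lie anywhere in $[0,H)$, not only at $s\le\hp/T_m$.

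Here is a concrete counterexample. Take $\task_1\to\task_2$ with $T_1=T_2=T$ and $\wcet_1+\wcet_2\le T$, so $\hp=T$ and $\hp/T_m=1$.
\begin{itemize}
\item The read interval $[T,\,2T-\wcet_2]$ of $\job{2}{2}$ overlaps the data interval $[\wcet_1,\,2T]$ of $\job{1}{1}$.
\item The DP indeed creates $\mathcal{D}_{2,2}=\{T+\wcet_2\}$ with $S_{2,2,T+\wcet_2}=0$.
\item This path has age $\rmax(\job{2}{2})+\wcet_2-\rmin(\job{1}{1})=2T$.
\item Restricting to $s=1$ gives only $\rmax(\job{2}{1})+\wcet_2-0=T$.
\item The shifted copy of the path would need root $\job{1}{0}$ at time $-T$, which is never seeded.
\end{itemize}
So the bound with $s\le\hp/T_m$ cannot be proved as written.

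What your Steps 1--2 (and the paper's one-line proof) actually establish is the bound with the maximum taken over all sink jobs $s=1,\dots,\bar N_m$ of the unrolled window. Your own observation that each hop advances the release time by less than $2T_\ell$ is exactly what shows these sinks suffice: with $H\ge\hp+\WCL$, they capture every path rooted in the first hyperperiod. Alternatively, you could seed roots over the whole window and keep only sinks in the last hyperperiod $[H-\hp,H)$.

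The same caveat applies to the level loop of Algorithm~\ref{alg:dp-V2}. It must range over all $\bar N_\ell$ jobs, not only $j\le\hp/T_\ell$, for your Step~2 induction to reproduce every hop. For example, $\job{2}{2}$ on an equal-period chain of length~3 would otherwise never be expanded.
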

\begin{proof}
    This comes directly from Lemmas~\ref{lemma:recursive}~and~\ref{lemma:dominance-1} and the fact that the output time of $\tau_{m,s}$ is at most $\rmax(\tau_{m,s})+C_{m,s}$.
\end{proof}

The above analysis shows the correctness of the DP approach. We can further prune $\mathcal{D}_{\ell,j}$ as follows. If $d' < d''$ are both in $\mathcal{D}_{\ell,j}$ and $S_{\ell,j,d'} \leq S_{\ell,j,d''}$, then we can remove $S_{\ell,j,d''}$ and also remove $d''$ from $\mathcal{D}_\ell$ if there is no further job defining that earliest output time. This pruning improves the complexity without sacrificing the optimality because: a smaller $d' < d''$ with an earlier starting time of $S_{\ell,j,d'} \leq S_{\ell,j,d''}$ passes the reachability check for a superset of successors and a smaller $S_{\ell,j,d'}$ yields an equal or larger data age at any leaf.

The DP data-age checker is presented in Algorithm~\ref{alg:dp-V2}. It runs in time
  $
    \bigO\!\left(\sum_{\ell=1}^{m-1}
      \bar{N}_\ell \cdot F_\ell \cdot \bar{N}_{\ell+1}\right)
    \;\le\; \bigO\!\left(m \cdot F_{\max} \cdot \bar{N}^2\right),
  $
where $\bar{N}_i$ is the number of jobs of task $\task_i$ within the analysis window, $\bar{N} = \max_i \bar{N}_i$, and $F_\ell$ is the number of non-dominated states per job (i.e., $|\mathcal{D}_{\ell, s}| \leq F_{\ell}$) at level~$\ell$ after dominance pruning. 
We note that the analysis and the DP algorithm can be easily adapted to deal with BCET $\neq$ WCET.

\subsection{GreedyRand-JLD Generator}
\label{sec:greedy-jld+}
We introduce a randomized variant of Greedy-JLD that we call \emph{GreedyRand-JLD}. The deterministic Greedy-JLD (Algorithm~3 in~\cite{becker2016synthesizing}) makes two fixed choices at each iteration: (a) it identifies the first (minimum) invalid path in the data propagation tree using a greedy depth-first search with oldest-candidate-first ordering, and (b) it walks the invalid path from leaf to root and places the JLD at the first \emph{mixed branch}, that is, a node whose subtree contains both valid and invalid paths, such that a sibling child leads to valid paths while the path-following child leads to invalid paths~\cite{becker2016synthesizing}. Note that the MECHAniSer implementation~\cite{becker2016mechaniser} simplifies step (b) to a free-pair scan, placing the JLD at the deepest edge that does not already carry a JLD, without explicitly checking for valid/invalid subtrees. In combination with the one-JLD-per-pair restriction, this guarantees termination in at most $m{-}1$ iterations but limits the algorithm to coarse-grained pruning.

Our GreedyRand-JLD variant preserves the same iterative structure but replaces both decision points with randomized choices that enable finer-grained control:
\begin{enumerate}[leftmargin=*, labelsep=0.5em]
  \item \textbf{Path selection.} Instead of the greedy DFS that returns the first invalid path found, the randomized variant builds the propagation tree (up to $500{,}000$ nodes with truncated trees using partial paths), annotates every subtree as valid ($\text{age} \le B$) or invalid ($\text{age} > B$), and selects \emph{uniformly at random} among all invalid complete or truncated root-to-leaf paths.
  \item \textbf{JLD placement.} Instead of always selecting the deepest mixed branch, the randomized variant collects \emph{all} mixed-branch nodes along the selected path and picks one \emph{uniformly at random}. A node is a mixed branch on a given path if (i)~it has both valid and invalid subtrees, (ii)~a sibling child has a valid subtree, and (iii)~the path-following child has an invalid subtree. The JLD is placed at the edge from the mixed branch to the path-following child, redirecting data flow from the invalid subtree toward the valid sibling.
  \item \textbf{Multiple JLDs per pair.} The one-per-pair restriction is removed: the algorithm may add several JLDs on the same $(\task_i, \task_j)$ edge with different local indices, enabling finer-grained pruning.
\end{enumerate}

The randomization is essential for diversity, i.e., each seed explores a different region of the JLD design space, and a wrong random choice is rejected by the Generate-and-Verify loop. Note that, unlike the deterministic Greedy-JLD which is bounded by $m{-}1$ iterations (one JLD per pair), the randomized variant may run up to $\mathit{max\_iter}$ iterations, since multiple JLDs per pair are allowed. The loop terminates early when the budget is satisfied or no invalid path remains and iterations that would emit a duplicate or unenforceable JLD are skipped (the algorithm continues to sample a different invalid path) rather than treated as termination. In our implementation we set $\mathit{max\_iter} = 50000$. 

\section{ML-Based Approach}
\label{sec:gnn}
We introduce a novel JLD synthesis approach based on machine learning (ML). The main insight is that cause-effect chains exhibit recurrent structural patterns connecting period configurations and their resulting JLDs, which a learning-based approach can exploit to generate candidate JLD sets efficiently, even on previously unseen task configurations. Note that we do not trust the ML output to be correct by construction and every candidate is verified by our DP data-age checker and the per-chain EDF feasibility checker within the Generate-and-Verify loop of Section~\ref{sec:generate_and_verify}. 

\begin{figure}
    \centering
    \includegraphics[width=0.7\linewidth]{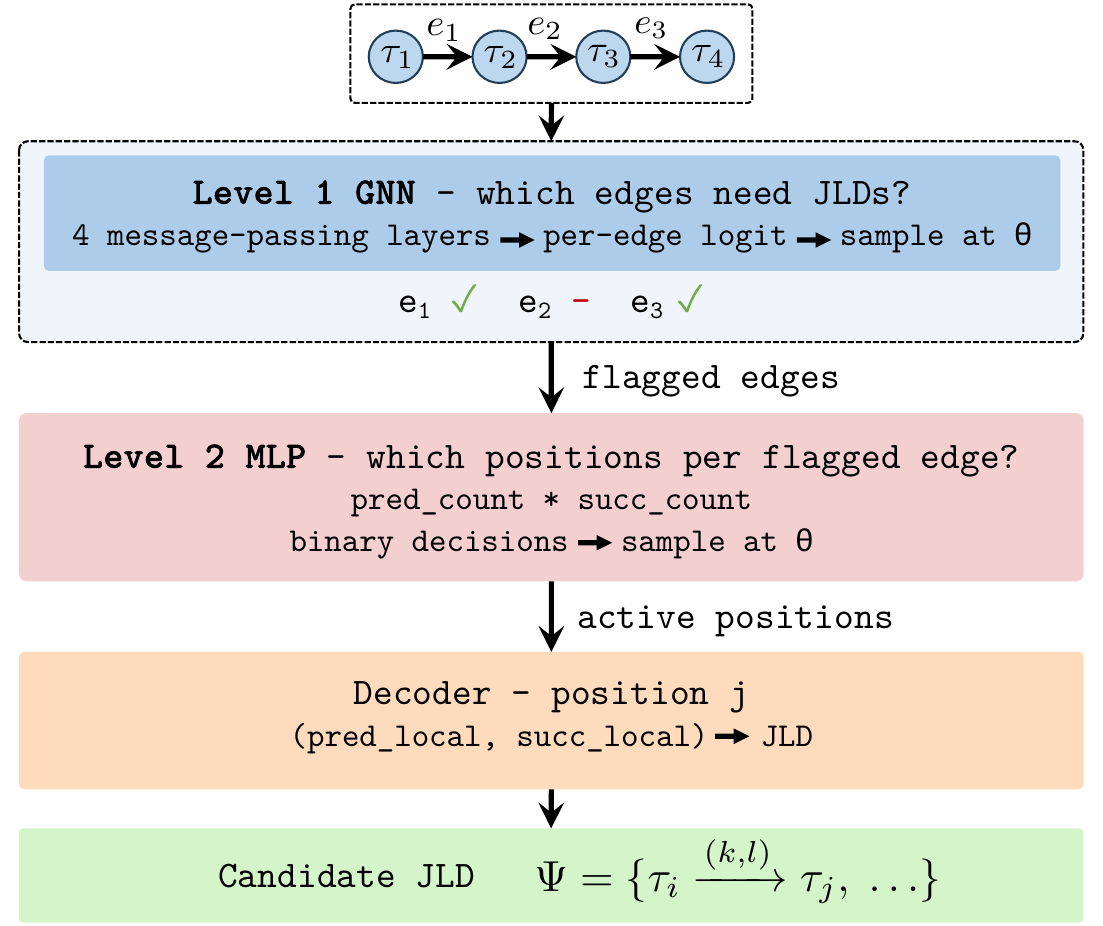}
    \caption{GNN inference pipeline. Level 1 (GNN) classifies each edge: JLD needed or not. For each flagged edge, Level 2 (MLP) classifies each (pred\_local, succ\_local) position. Both levels use temperature-controlled sampling $\theta$. The decoder maps active positions to a candidate $\Psi$ which is verified and integrated into the Generate-and-Verify framework (Algorithm~\ref{alg:gen-verify}).}
    \label{fig:inference}
\end{figure}

We adopt a two-level architecture (c.f., Figure~\ref{fig:inference}), motivated by the observation that the model needs to learn two structurally distinct decisions, namely \emph{which edges} in a chain need JLDs and, for those edges, \emph{which positions} to constrain.

\noindent \textbf{Level~1 (edge selection).} The first decision is which edges of the chain need JLDs at all. Note that the data-age budget is shared across the entire chain, hence constraining one edge affects how much slack remains for the others. We model this as a graph-level decision and employ a Graph Neural Network (GNN), which is naturally suited to the graph structure of cause-effect chains. The GNN learns to identify the bottleneck edges, i.e., those where the period mismatch contributes most to the worst-case data age. As an intuition, an edge between a $10$\,ms task and a $100$\,ms task admits $10$ candidate predecessor instances per pair-HP, whereas an edge between two $10$\,ms tasks admits essentially none. Thus, the GNN learns to focus JLDs where the propagation ambiguity is largest relative to the remaining budget.
The chain is modeled as a \emph{bidirectional graph}: $m$ nodes (tasks) connected by $m{-}1$ edges in both directions.
Each node carries 6 features: utilization $\wcet_i/T_i$; $\log_2(\hp/T_i)$ (jobs per HP); normalized chain position $i/(m{-}1)$; read-interval width ratio $(T_i - \wcet_i)/T_i$; budget ratio $B/\hp$; and deadline ratio $D_i/T_i$. Note that under the implicit-deadline assumption, the deadline ratio is identically~$1.0$ on every node and therefore contributes no learning signal.
Each edge carries 9 features: period ratio; log period disparity; harmonic indicator; budget tightness; normalized edge position; log chain length; reachability fraction; max and average branching factor.
A linear encoder maps node features to 64-dimensional hidden vectors and four message-passing layers with residual connections and layer normalization propagate context, each using a 2-layer MLP message function over concatenated source, target, and edge features. A 3-layer edge-classification head (hidden sizes $137{\to}64{\to}32{\to}2$, reading the two endpoint embeddings concatenated with the 9-D edge features) emits a binary JLD/no-JLD decision per edge. During training, an auxiliary head also predicts the chain's normalized max data age (MSE loss, weight~0.1), encouraging the shared backbone to learn features that reflect overall chain timing. This auxiliary head is discarded at inference and only the JLD predictions are used.

\noindent \textbf{Level~2 (position selection).} Once Level~1 has flagged the edges that need JLDs, Level~2 decides the specific JLD positions on each such edge. Note that this is a local decision, as it depends primarily on the period ratio of the two adjacent tasks rather than on the global structure of the chain. As an intuition, on a $10$\,ms-to-$100$\,ms (fast-to-slow) edge there are $10$ predecessor instances per pair hyperperiod, and the JLD typically constrains the successor to read from a recent predecessor instance, e.g., position $(1,1)$ encodes ``read from the first predecessor instance in each chain-HP group.'' Training labels generated with pair-HP indexing (e.g., by Greedy-JLD) are expanded to equivalent chain-HP positions, so that a same-rate pair-HP JLD $(1,1)$ becomes the diagonal $\{(1,1),(2,2),\ldots\}$ across the chain hyperperiod. Conversely, on a $100$\,ms-to-$10$\,ms (slow-to-fast) edge, the successor activates $10$ times per predecessor activation, and the JLD selects which of these $10$ successor instances must wait for fresh predecessor data. Since the relevant pattern depends on the period-ratio geometry rather than on global chain context, a simple Multi-Layer Perceptron (MLP) is sufficient. For each flagged edge, we use a 3-layer MLP (hidden sizes $7{\to}32{\to}32{\to}1$) that classifies the $\mathit{pred\_count} \times \mathit{succ\_count}$ binary positions, where each position encodes a pair $\langle \mathit{pred\_local}, \mathit{succ\_local} \rangle$ within the chain hyperperiod. Each position carries 7 features: normalized pred and succ local indices, both adjacent-task utilizations, $\log_2$ of the total position count, per-predecessor slack $(1{-}u_\mathrm{pred})/\mathit{pred\_count}$, and the centered pred position $(\mathit{pred\_local}/\mathit{pred\_count}){-}0.5$. A deterministic decoder converts the activated positions into JLD objects.

\subsection{Temperature-Controlled Sampling}

At inference, the model does not simply output its most likely prediction. Instead, logits are scaled by a temperature parameter $\theta$ before sampling. Level~1 uses softmax sampling: $p_k \propto \exp(z_k / \theta)$, selecting one class per edge. Level~2 uses independent sigmoid--Bernoulli sampling: each position is activated with probability $\sigma(z / \theta)$, so high temperature can activate many positions simultaneously.
Low temperatures ($\theta \to 0$) make both levels near-deterministic; high temperatures ($\theta \gg 1$) flatten toward uniform (Level~1) or $p{\approx}0.5$ per position (Level~2).
We use a fixed temperature ladder of 10 values spanning three orders of magnitude: $\theta \in \{0.01,\; 0.1,\; 0.3,\; 0.5,\; 0.8,\; 1.0,\; 1.5,\; 2.0,\; 3.0,\; 5.0\}.$
For $K{=}10$ attempts, each temperature is used once. For $K{>}10$, the ladder is traversed in the first 10 attempts and remaining attempts resample at $\theta{=}5.0$, providing independent stochastic draws that retain the model's learned bias while maximizing diversity. 
On system retries (Algorithm~\ref{alg:gen-verify}), the minimum temperature is shifted upward ($\theta_{\min} \in \{0.01, 0.5, 1.5\}$ for retries $0, 1, 2$), filtering the ladder to skip the near-deterministic region and increase exploration.

\begin{landscape}
\begin{figure}[p]
    \centering
    \includegraphics[width=0.95\linewidth]{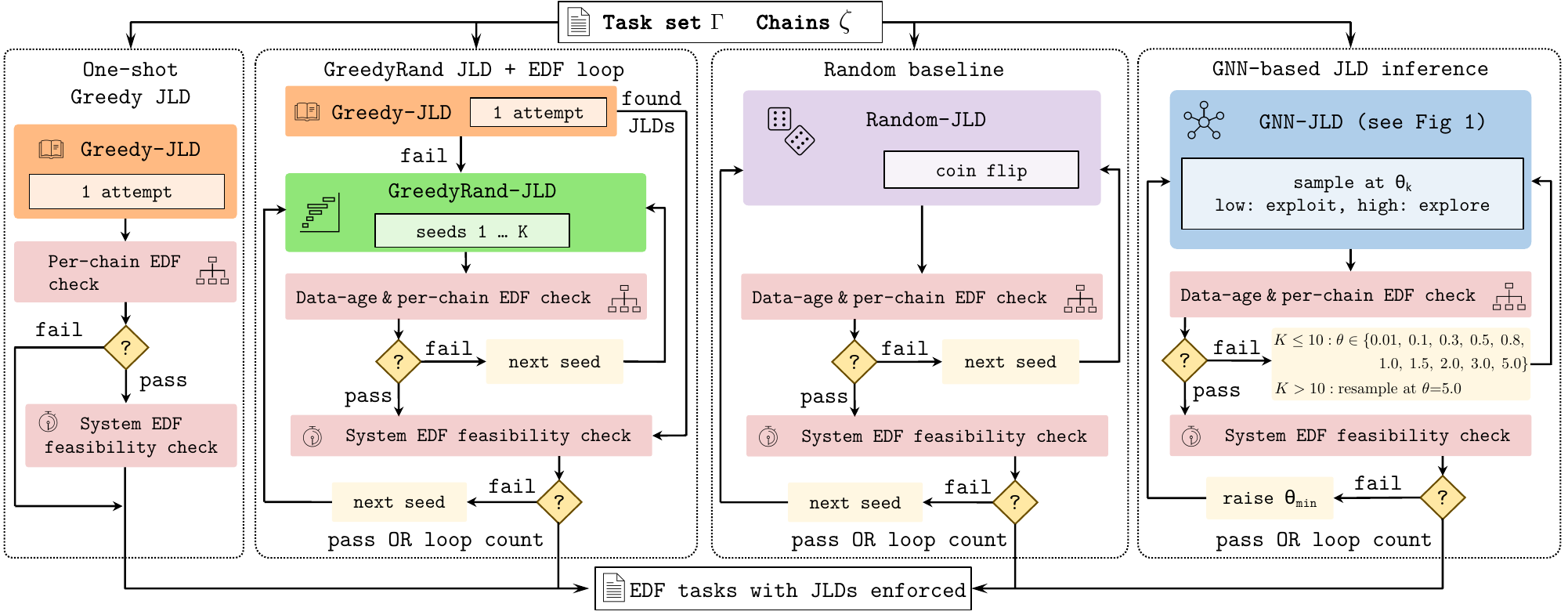}
    \caption{Benchmarking workflow. (a) Greedy-JLD: single deterministic attempt, followed by per-chain EDF enforceability and system EDF checks. (b) GreedyRand-JLD loop: Greedy-JLD first, then randomized heuristic seeds. (c) GNN-JLD loop: temperature-controlled sampling. (d) Random-JLD loop: uniform random baseline. All methods share the system-level EDF demand bound test; loop methods retry on system failure.}
    \label{fig:benchmark}
\end{figure}
\end{landscape}

\subsection{Training}
\sloppypar
Training data was generated synthetically with $4$ period sets: Set~A (automotive: $\{5, 10, 20, 40, 80\}$\,ms), Set~B ($\{8, 16, 24, 48, 96\}$\,ms), Set~C ($\{10, 30, 60, 90, 120\}$\,ms), Set~D ($\{10, 20, 25, 50, 100\}$\,ms). Task sets have utilization levels $U \in \{0.2, 0.4, 0.6, 0.8\}$ using UUniFast~\cite{bini2005measuring}, with 8--50 tasks, 1--10 chains of length 3--15, and implicit deadlines ($D_i = T_i$). Chain budgets are set to a fraction (0.3--0.9) of the unconstrained data age. Chains are generated with mixed period orderings: approximately 50\% fast-to-slow (common in automotive sensor-to-actuator chains, e.g.,~\cite{GuenzelBecker25RTAS}), 25\% slow-to-fast, and 25\% randomly shuffled. Training records are drawn from a fallback pipeline: Becker et al~\cite{becker2016synthesizing} is tried first; if either per-chain DA, per-chain EDF, or system-level EDF rejects, we fall through to a max-age Becker variant and then up to 10 randomized seeds with our GreedyRand-JLD method. Only DA-satisfying, system-EDF-feasible records are kept. Of the $12598$ records, 77.1\% come from~\cite{becker2016synthesizing}, 7.8\% from the variant, and 15.1\% from randomized seeds. Training data is dominated by fast-to-slow orderings (${\sim}66\%$), with slow-to-fast (${\sim}20\%$) and shuffled (${\sim}14\%$) minorities. The resulting training data contains chain-level records across all four period sets, with chain lengths $3$--$15$ and $1$--$27$ JLDs per chain. We note that the training data does not have to contain optimal JLDs but it is sufficient for them to be correct. Training on optimal JLD data is left for future work. Level~1 is trained for $1{,}000$ epochs with cross-entropy loss plus MSE loss for graph-level data age regression (weight $0.1$), using the Adam optimizer. Level~2 is trained for $2{,}000$ epochs with binary cross-entropy per position; the per-edge label space is $\mathit{pred\_count} \times \mathit{succ\_count}$.

\subsection{Baseline Methods}
\label{sec:baselines}

\begin{table}[t]
\centering
\footnotesize
\setlength{\tabcolsep}{2.7pt}
\renewcommand{\arraystretch}{1.15}
\caption{Simplified per-candidate complexity.
$m$ = chain length, $N = \hp/T_{\min}$, $n$ = total tasks in the system.}
\label{tab:complexity}
\begin{tabular}{@{}lccc@{}}
\toprule
 & \shortstack{Greedy-JLD /\\ GreedyRand-JLD}
 & GNN-JLD
 & Random-JLD \\
\midrule
\shortstack[l]{Generation (per cand.)}
 & $O\!\left(N \prod_i b_i\right)$
 & $O(mN^{2})$
 & $O(mN^{2})$ \\[2pt]

\shortstack[l]{DA + per-chain EDF}
 & internal / $O(mN^{2})$
 & $O(mN^{2})$
 & $O(mN^{2})$ \\[2pt]

System EDF
 & $O(n^3N^2)$
 & $O(n^3N^2)$
 & $O(n^3N^2)$ \\
\bottomrule
\end{tabular}
\end{table}

We compare $4$ JLD synthesis methods (Figure~\ref{fig:benchmark}). For generate-and-verify methods each candidate is checked by (i)~the DP data age checker and (ii)~the per-chain EDF feasibility checker (cf. Algorithm~\ref{alg:gen-verify}). A candidate is accepted only if both pass, otherwise the next attempt is tried. After all chains pass individually, a system-level EDF demand bound test checks combined schedulability, with up to $S$ system retries (cf. Algorithm~\ref{alg:gen-verify}). The $4$ JLD synthesis methods compared are: 
\begin{enumerate}[leftmargin=*, labelsep=0.5em]
    \item \textbf{Greedy-JLD} from~\cite{becker2016synthesizing, becker2016mechaniser}. Since~\cite{becker2016synthesizing, becker2016mechaniser} is deterministic, we do not use retries, but still do per-chain EDF feasibility checks and the final system-level EDF demand-bound test when reporting Enf.\ and Valid. 
    \item \textbf{GreedyRand-JLD}: out of the $K$ per-chain attempts, the first attempt runs deterministic Greedy-JLD (cf.\ Section~\ref{sec:greedy-jld+}). If that attempt fails the DA + per-chain EDF gate, the remaining $K-1$ attempts run the GreedyRand-JLD variant with distinct seeds.
    \item \textbf{Random-JLD baseline}: a pure-random baseline that has a generator based on uniform random sampling, i.e., each edge is independently flagged (probability 0.5), and each position is independently activated (probability 0.3). To keep every attempt non-trivial, if all edge flags miss we force one randomly chosen edge to be flagged, and if a flagged edge has no active position we force one randomly chosen position to be active. This baseline isolates the value of the GNN's learned structure: if the GNN performs similarly to random proposals with the same verification loop, it has not learned useful patterns.
    \item \textbf{GNN-JLD}: generates candidates via temperature-controlled sampling. Each of $K$ attempts samples from the logits at increasing temperature $\theta$, producing diverse JLD sets
    (see Section~\ref{sec:gnn}).
\end{enumerate}

We note that the four methods differ also in their JLD encoding. Greedy-JLD encodes each JLD at the pair hyperperiod $\mathrm{lcm}(T_{\mathrm{pred}}, T_{\mathrm{succ}})$, with at most one JLD per (pred-task, succ-task) pair (matching~\cite{becker2016mechaniser, becker2016synthesizing} encoding). GreedyRand-JLD, GNN-JLD, and Random-JLD methods encode each JLD at the chain hyperperiod $\hp = \mathrm{lcm}(T_i \mid \tau_i \in \chain)$ and allow multiple JLDs per pair. Each JLD carries its own reference HP and the verifier interprets each one accordingly, so the encodings co-exist in a single JLD set without further translation.

All three Generate-and-Verify methods (GreedyRand-JLD, GNN-JLD, Random-JLD) follow the same best-effort accounting: when a chain's $K$ attempts all fail, the loop retains the candidate that came closest to passing (per the ordering in Section~\ref{sec:generate_and_verify}) and contributes its JLDs to the per-method system union. The chain itself is still marked unsatisfied in isolation, but its best-effort JLDs participate in the post-combine DA recheck and the system-level EDF demand-bound test. Hence cross-chain shared-task tightening can rescue a chain whose own candidates failed in isolation. The single-attempt Greedy-JLD baseline does not have this loop structure, but it has an analogous property in MECHAniSer's incremental synthesis: JLDs are carried forward unconditionally across chains via cumulative state~\cite{becker2016synthesizing}, so the system union accumulates partial work from chains the synthesizer could not satisfy.

\subsection{Complexity}

\begin{landscape}
\begin{table}[p]
\caption{Benchmark results.  DA = DA-satisfied chains, Enf.\ = EDF-enforceable chains, Valid = fully valid use-cases (all chains DA + EDF + system EDF).  Med.\ = median per-chain generation time (ms).  JLD = mean JLDs per DA-satisfied chain. The bottom block reports ablation configurations with $K{=}1$ (no search budget).}
\label{tab:results}
\footnotesize
\begin{tabular}{l@{\;}rrrrr@{\quad}rrrrr@{\quad}rrrrr}
\toprule
& \multicolumn{5}{c}{\textbf{Known} (500 use-cases, 1257 chains)} & \multicolumn{5}{c}{\textbf{Similar} (500 use-cases, 1411 chains)} & \multicolumn{5}{c}{\textbf{Unknown} (500 use-cases, 1398 chains)} \\
\cmidrule(lr){2-6} \cmidrule(lr){7-11} \cmidrule(lr){12-16}
Method & DA & Enf & Val & Med & JLD & DA & Enf & Val & Med & JLD & DA & Enf & Val & Med & JLD \\
\midrule
Random-JLD ($K{=}10$)
    & 815 & 383 & 79 & 72 & 16.1
    & 805 & 384 & 51 & 86 & 17.8
    & 863 & 411 & 55 & 47 & 19.6 \\
Greedy-JLD
    & 1146 & 1121 & 384 & 27 & \best{3.8}
    & 1176 & 1147 & 282 & 30 & \best{4.1}
    & 1126 & 1113 & 298 & 23 & \best{3.6} \\
GreedyRand-JLD ($K{=}10$)~~~~~~~
    & 1211 & 1211 & 455 & 31 & \second{4.2}
    & 1300 & \second{1296} & 386 & 41 & \second{5.1}
    & 1227 & 1227 & 358 & 34 & \second{4.5} \\
GNN-JLD ($K{=}10$)
    & \second{1240} & \second{1220} & \second{468} & \best{6} & 9.7
    & \second{1330} & 1263 & \second{392} & \best{8} & 13.8
    & \second{1273} & \second{1239} & \second{391} & \best{7} & 11.4 \\
GNN-JLD ($K{=}50$)
    & \best{1257} & \best{1248} & \best{487} & \best{6} & 9.7
    & \best{1389} & \best{1330} & \best{434} & \best{8} & 14.5
    & \best{1335} & \best{1298} & \best{416} & \best{7} & 11.8 \\
\midrule
\multicolumn{16}{l}{\emph{Ablation (no search budget):}} \\
Random-JLD ($K{=}1$)
    &  235 &   95 &   6 &  8 & 11.3
    &  232 &  113 &   2 &  8 &  9.9
    &  244 &   94 &   3 &  6 & 11.6 \\
GNN-JLD ($K{=}1$, argmax)
    &  904 &  904 & 257 &  4 &  7.9
    &  814 &  814 & 151 &  5 & 11.8
    &  820 &  809 & 160 &  4 &  8.8 \\
\bottomrule
\end{tabular}
\end{table}
\end{landscape}

Table~\ref{tab:complexity} compares per-candidate costs in terms of chain length~$m$ and $N = \hp/T_{\min}$.
The heuristic methods are dominated by propagation-tree construction: $O(N \cdot \prod b_i)$ per candidate, where $b_i = \lceil(2T_i - \wcet_i) / T_{i+1}\rceil + 1$ is the branching factor at edge~$i$, giving exponential growth in~$m$. Each candidate of the GreedyRand-JLD internally adds JLDs one at a time, rebuilding the propagation tree at every iteration up to $\mathit{max\_iter}$ (duplicate detection limits most runs to fewer iterations). The GNN-JLD and Random-JLD methods generate and verify each candidate in $O(m N^2)$. 
The Generate-and-Verify loop runs up to $K$ candidates per chain; the total per-chain cost is therefore $O(K \cdot m \cdot N^2)$ for GNN-JLD/Random-JLD vs.\ $O(K \cdot N \cdot \prod b_i)$ for the heuristic. The system-level EDF demand bound test (Lemma~3.4 of~\cite{baruah1990algorithms}) runs once per use-case over the virtualized task set in $O(|\Lambda| \cdot |\Delta| \cdot n)$, where $|\Lambda|, |\Delta| \leq O(n \cdot N)$ are the test point set sizes; bounding yields $O(n^3 N^2)$. Table~\ref{tab:complexity} simplifies several constant factors for readability. The DP checker's frontier size $F_{\max}$ equals~$1$ for harmonic periods, reducing $O(m F_{\max} N^2)$ to $O(m N^2)$. The number of root jobs $R = \hp / T_1 \leq N$ is absorbed into the $N$ factor of the heuristic tree cost. GNN-JLD hidden dimensions ($h, h_2$) and per-chain EDF transform costs ($O(V{+}E)$) are omitted as they are dominated by the $O(m N^2)$ terms.

\begin{figure*}[!ht]
\centering
\includegraphics[width=0.99\linewidth]{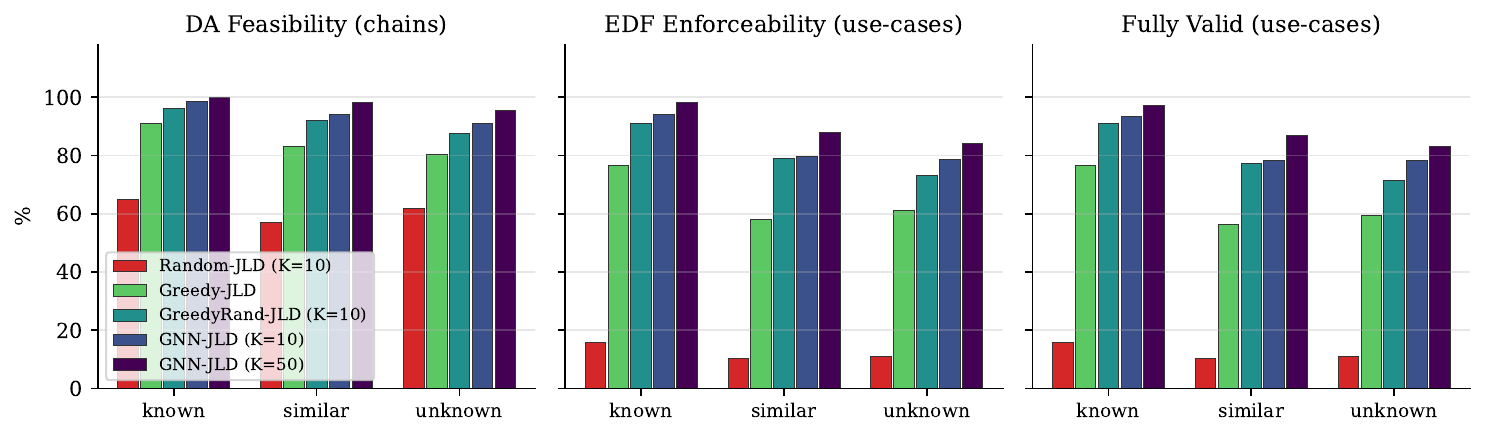}
\caption{Success rates across all benchmark sets.
Left: DA feasibility per chain.
Center: EDF enforceability per use-case (all chains individually DA-feasible and EDF-enforceable).
Right: fully valid per use-case (all chains DA + EDF enforceable + system-level EDF schedulable).}
\label{fig:success_rates}
\end{figure*}

\section{Evaluation}
\label{sec:evaluation}

\subsection{Experimental Setup}

The benchmark consists of three test sets, each containing $500$ task sets: \emph{Known}, randomly chosen from the training data; \emph{Similar}, drawn from the same period families and chain characteristics as the training data but with a new random seed; and \emph{Unknown}, with synthetic period sets never seen in training, whose values are drawn from $\{3,4,5,6,8,9,12,15,16,18,30,32,36,40,54,60,64,120\}$\,ms. Each task set in \emph{Unknown} contains $12$--$40$ tasks, $1$--$7$ chains of length $4$--$12$, and a system utilization $U \sim \mathcal{U}(0.15, 0.85)$.

\begin{figure}[t]
\centering
\includegraphics[width=0.75\columnwidth]{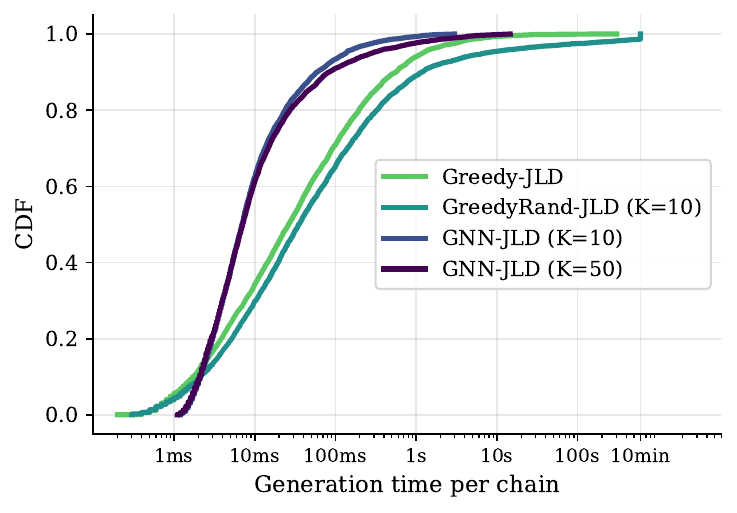}
\caption{CDF of per-chain generation time (log scale). GNN-JLD completes the majority of chains in under 50\,ms (median 6--8\,ms). The heuristic methods span three orders of magnitude.}
\label{fig:gen_time_cdf}
\end{figure}

For each method, we report four levels of success: (i) \emph{DA satisfied} (per chain), the data age is within budget; (ii) \emph{EDF enforceable} (per chain), the DA-satisfied JLDs pass the per-chain precedence transform with $D \geq \wcet$ after adjustment; (iii) \emph{System EDF} (per use-case), the combined demand-bound test passes, which we evaluate only when all chains in the use-case satisfy DA and per-chain EDF; and (iv) \emph{Fully valid} (per use-case), all three levels hold for all chains in the use-case. We additionally report the average number of JLDs per DA-satisfied chain, the median per-chain and per-use-case runtime, the average number of attempts for retry-capable methods, and the DA success rate broken down by chain length.

\subsection{Results}

All experiments were run on a single Apple M4 core (no GPU). For all loop methods we use $K{=}10$ per-chain attempts and $S{=}3$ system retries, with a per-chain timeout of $600$\,s and chains exceeding the timeout are reported as unsatisfied. We additionally evaluate a configuration of GNN-JLD with $K{=}50$ per-chain attempts to study the effect of an extended compute budget. Table~\ref{tab:results} summarizes the aggregate results, and the figures below break down individual metrics.

\textbf{Success rates.}
Figure~\ref{fig:success_rates} and Table~\ref{tab:results} present the multi-level success metrics (numbers in green show the best value per column and in orange show the second best per column). At the DA level, GNN-JLD($K{=}50$) achieves the highest chain-level satisfaction across all benchmark sets: 100.0\% on Known, 98.4\% on Similar, and 95.5\% on Unknown. The single-attempt Greedy-JLD achieves 91.2\%, 83.3\%, and 80.5\% respectively, while GreedyRand-JLD improves this to 96.3\%, 92.1\%, and 87.8\% through randomized retries. For fully-valid use-cases, GNN-JLD($K{=}50$) achieves the highest valid rate across all benchmark sets: 97.4\% on Known, 86.8\% on Similar, and 83.2\% on Unknown, compared to 91.0\%, 77.2\%, and 71.6\% for GreedyRand-JLD, demonstrating generalization to unseen period sets. The gap between DA satisfaction and full validity reflects the EDF enforceability bottleneck. GNN-JLD produces more JLDs per chain (mean 10--15 vs.\ 3.6--4.1 for Greedy-JLD and 4.2--5.1 for GreedyRand-JLD), and some configurations create scheduling constraints infeasible under EDF. The Random baseline validates the learned structure of GNN-JLD since with the same decoder and verification loop, random proposals achieve 10--16\% fully-valid use-cases compared to 78--94\% for GNN-JLD($K{=}10$).

\begin{figure}[!t]
\centering
\includegraphics[width=0.75\columnwidth]{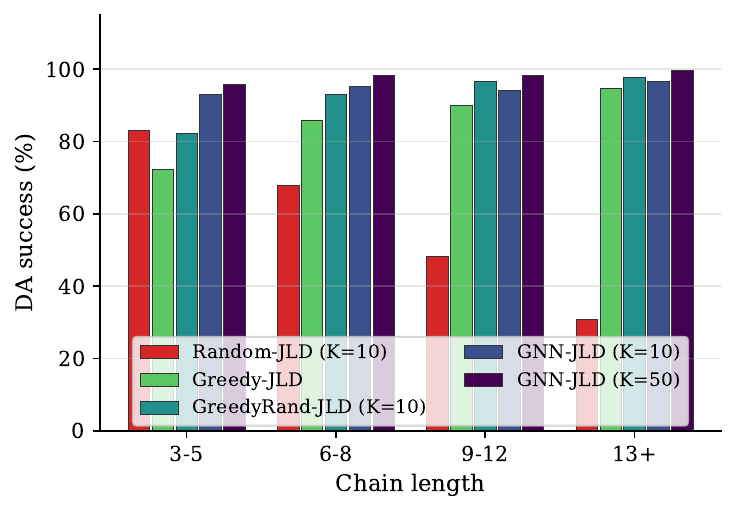}
\caption{DA satisfaction rate by chain length across all benchmark sets.}
\label{fig:da_by_len}

\end{figure}

\begin{figure}[!t]
\centering
\includegraphics[width=0.75\columnwidth]{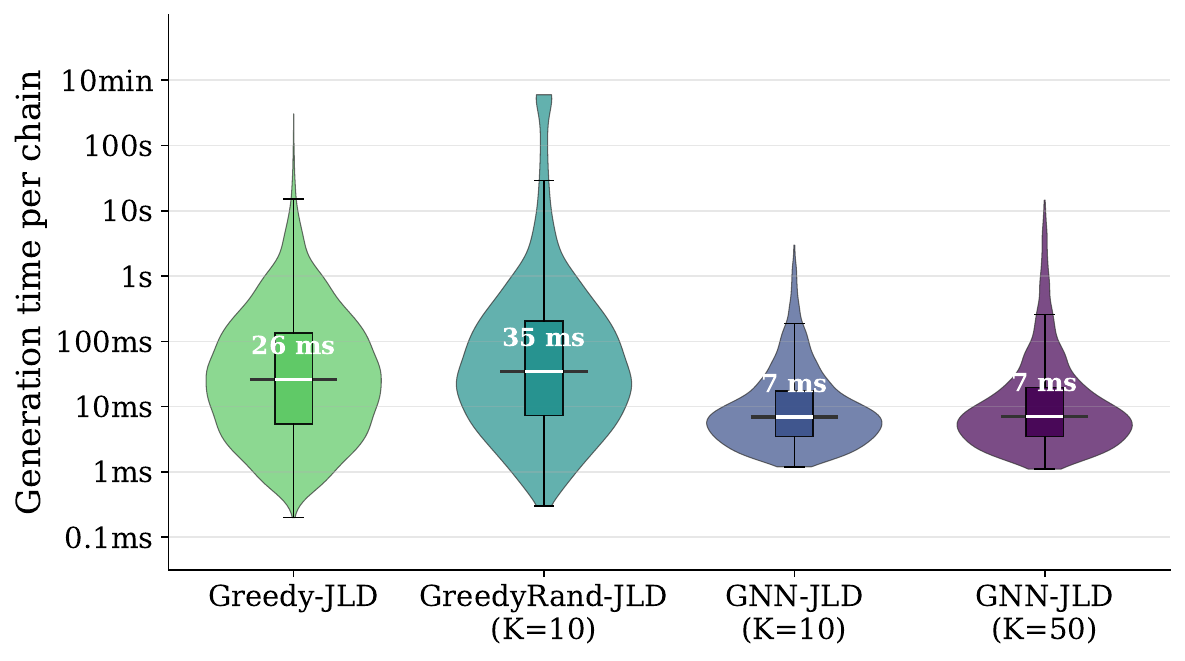}
\caption{Per-chain generation time (log scale).
The inner box is the interquartile range and the white line is the median.}
\label{fig:gen_time_box}

\end{figure}

\begin{figure}[!ht]
\centering
\includegraphics[width=0.75\columnwidth]{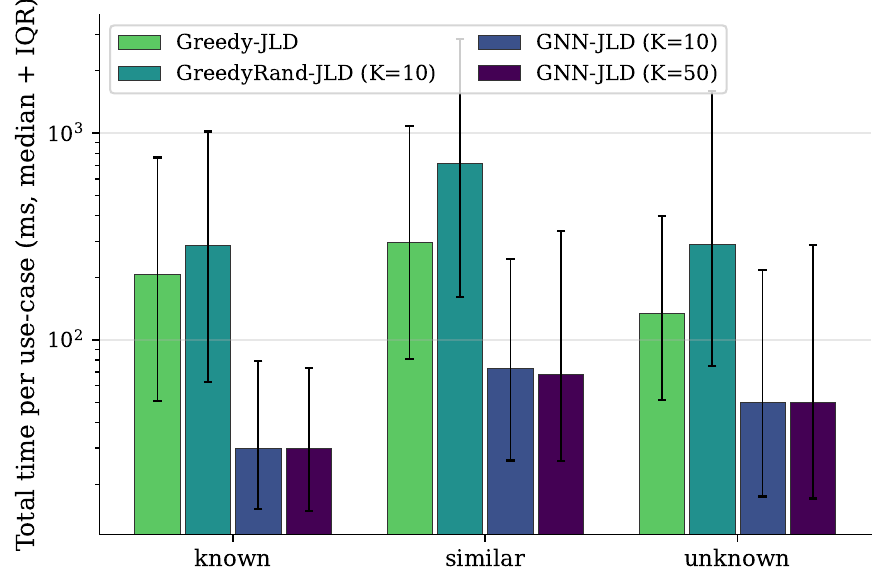}
\caption{Median total time per use-case (generation + system EDF).}
\label{fig:total_time_median}

\end{figure}

\begin{figure*}[!ht]
\centering
\includegraphics[width=0.99\linewidth]{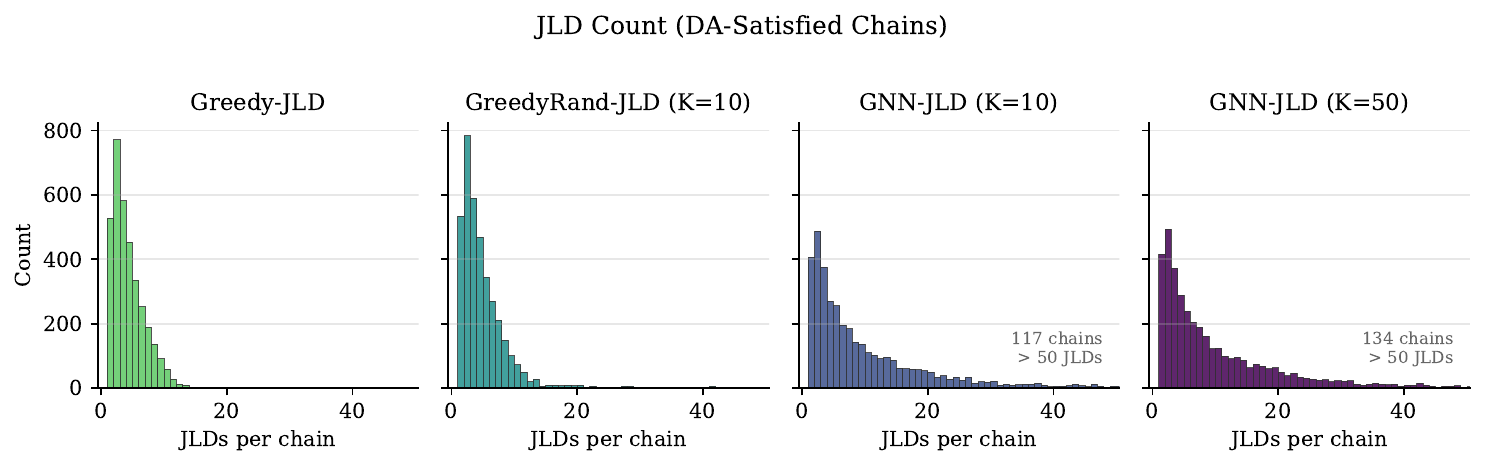}
\caption{Distribution of JLDs per DA-satisfied chain.
Greedy-JLD produces the fewest JLDs due to its one-per-pair rule.}
\label{fig:jld_count}

\end{figure*}

A central question is whether GNN-JLD's gain over Greedy-JLD stems from the $K$-attempt generate-and-verify loop and not from the learned structure. We address this with the two $K{=}1$ ablation rows in Table~\ref{tab:results}, where \emph{GNN-JLD ($K{=}1$, argmax)} runs the trained model with a single near-zero-temperature sample ($\theta < 10^{-6}$, no search), and \emph{Random-JLD ($K{=}1$)} runs a single uniform-random run. With $K{=}1$  GNN-JLD outperforms Random-JLD by roughly $3$--$4\times$ in DA satisfaction (71.9\% vs.\ 18.7\% on Known, 57.7\% vs.\ 16.4\% on Similar, 58.7\% vs.\ 17.5\% on Unknown), showing that there is learned structure independent of the search loop. With $K{=}10$, random search alone reaches only 15.8\,/\,10.2\,/\,11.0\% fully-valid use-cases, well below the deterministic Greedy-JLD baseline (76.8\,/\,56.4\,/\,59.6\%), so the $K$-attempt loop by itself does not explain GNN-JLD's advantage.

\begin{figure}[!ht]
\centering
\includegraphics[width=0.75\columnwidth]{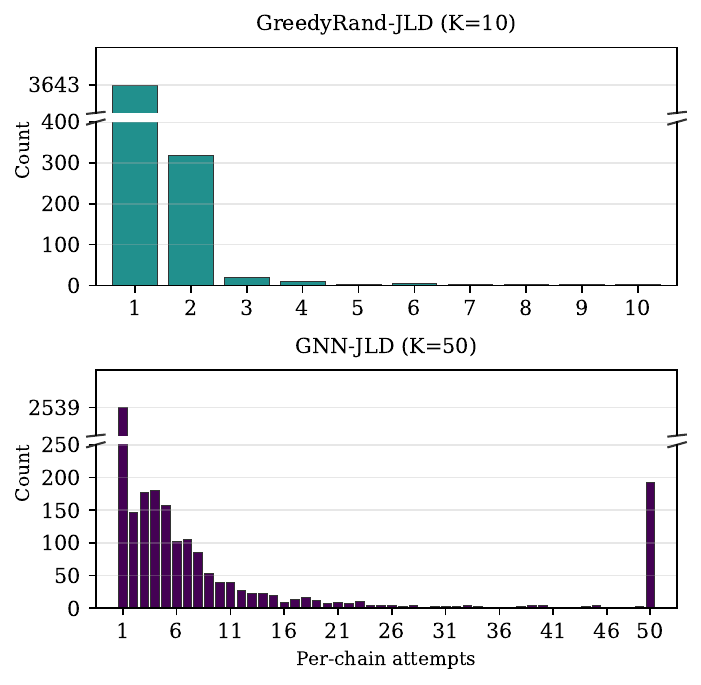}
\caption{Distribution of per-chain attempts for retry-capable methods.}
\label{fig:attempts}
\end{figure}

\textbf{Chain length sensitivity.}
Figure~\ref{fig:da_by_len} shows DA success by chain length. All methods maintain relatively stable rates across lengths 3--15. GNN($K{=}50$) achieves 92--100\% DA satisfaction across all length bins (lowest at Unknown $3$--$5$ with 91.9\%, reaching 100\% on every Known bin). GNN-JLD($K{=}10$) covers a wider range (89--99\%), with the lowest bin at Unknown $3$--$5$ (88.9\%) since chains in the Unknown set exhibit period combinations not seen during training. Increasing the attempts to $K{=}50$ recovers most of these cases: Unknown $3$--$5$ rises to 91.9\% and Similar $3$--$5$ rises from 93.7\% to 96.3\%. GreedyRand-JLD spans 80--100\% across length bins, with the highest rates at longer Known chains (Known $13{+}$ and $9$--$12$: 100\%) and the lowest at shorter Similar chains (Similar $3$--$5$: 79.7\%). 
Random-JLD performs reasonably on the 3–5-task chains because the JLD configuration space is small enough that $K{=}10$ uniform samples cover it densely and a single bottleneck edge often suffices. However, it collapses on longer chains, where the search space grows combinatorially.

\textbf{Generation time.}
Both GNN-JLD variants have a median of 6--8\,ms; GNN-JLD($K{=}10$) completes about 59--70\% of chains under 10\,ms and 86--94\% under 50\,ms, while GNN-JLD($K{=}50$) reaches 57--68\% under 10\,ms and 82--92\% under 50\,ms. The heuristic methods exhibit heavy-tailed distributions: Greedy-JLD has a median of 23--30\,ms with a tail extending beyond minutes, while GreedyRand-JLD reaches the 600\,s per-chain timeout on hard chains due to propagation tree explosion (18 timeouts on Similar, 41 on Unknown). Figure~\ref{fig:gen_time_box} shows the full distributions, confirming the GNN's tight concentration around 6--8\,ms against the heuristic spread across four orders of magnitude. At the median, GNN-JLD is \textbf{3--5$\times$ faster}; in the tail, the speedup exceeds \textbf{100$\times$}.

Figure~\ref{fig:total_time_median} shows the end-to-end runtime per use-case, covering all per-chain generation (including system retries) plus the system-level EDF demand bound test. The advantage of GNN-JLD is amplified at the use-case level: median ${\sim}$30--73\,ms vs.\ ${\sim}$133--294\,ms for single-attempt Greedy-JLD (\textbf{3--7$\times$ speedup}) and ${\sim}$285--718\,ms for GreedyRand-JLD (\textbf{6--10$\times$ speedup}).

\textbf{JLD count.}
Figure~\ref{fig:jld_count} shows the JLD count distribution for DA-satisfied chains. The heuristic methods produce concentrated counts (mean 3.6--5.1), structurally limited by the one-per-pair rule.
GNN-JLD produces wider distributions (mean 10--15), with a long tail: 117 chains for GNN-JLD($K{=}10$) and 134 for GNN-JLD($K{=}50$) exceed 50 JLDs. The random and ML-based methods use the chain hyperperiod (LCM of all task periods) for JLD local indexing, which yields finer-grained, per-edge constraints that repeat less frequently than Becker's pair hyperperiod (LCM of only the two adjacent task periods), requiring potentially more JLDs.

\textbf{Attempt efficiency.}
Figure~\ref{fig:attempts} shows the per-chain attempt distribution.
GreedyRand-JLD solves the majority of chains in the first attempt using the deterministic Becker attempt (attempt~1). GNN-JLD($K{=}50$) is front-loaded similarly, i.e., the first temperature ($\theta_0 = 0.01$, near-deterministic) succeeds for most chains, but the gradual tail through increasing temperatures recovers additional cases, with fewer chains exhausting all 50 attempts compared to $K{=}10$.

\textbf{Per-use-case comparison.}
Figure~\ref{fig:scatter} compares per-use-case normalized data age between Greedy-JLD and GNN-JLD($K{=}50$) on the Unknown set. Each axis shows the worst chain's ratio $\text{max\_data\_age} / B$. Values $\leq 1$ satisfy the data-age budget and values $> 1$ exceed it. The two dotted red lines at $x{=}5$ and $y{=}5$ mark the over-constrained penalty bin, when a method's JLDs block all propagation paths so that no data can flow from sensor to actuator. Points below the diagonal: GNN-JLD achieves lower (better) normalized age. Quadrants reflect data-age satisfaction only, i.e., marker color reflects full validity (DA + per-chain EDF + system EDF): green = both fully valid, dark blue = only GNN-JLD fully valid, red = only Greedy-JLD fully valid, gray = neither.
A color/quadrant mismatch, e.g. grey marker inside the blue quadrant, means e.g. that a use-case in which GNN-JLD finds JLDs with DA satisfied but the EDF check rejected them.

\begin{figure}[!t]
\centering
\includegraphics[width=0.8\columnwidth]{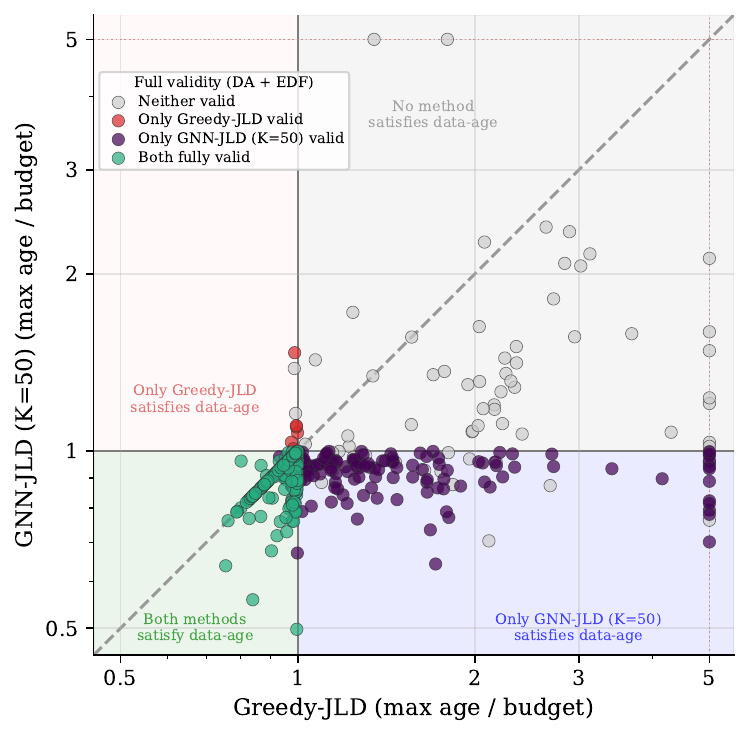}
\caption{Normalized data age per use-case for Greedy-JLD vs.\ GNN-JLD($K{=}50$) on the Unknown set.}
\label{fig:scatter}
\end{figure}

The ``Only GNN-JLD valid'' cluster in the blue quadrant represents the 130 use-cases solved exclusively by the GNN-JLD and only 12 use-cases are valid only under Greedy-JLD (but most of them are close to being fulfilled also with GNN-JLD). When both methods succeed (green markers in the green quadrant), GNN-JLD achieves comparable normalized age despite using chain-HP indexing with more JLDs. However, when both fail (grey quadrant), most of the use-cases have lower cost with GNN-JLD($K{=}50$). The vertical/horizontal clustering along these red lines is denser on the GNN-JLD $y$-axis because GNN-JLD produces more JLDs per chain, occasionally over-constraining chains that the Greedy heuristic leaves unconstrained (but also unsatisfied). A small number of markers also break the colour--quadrant correspondence, where both method data ages fit their budgets but the GNN-JLD candidate JLDs were rejected by an EDF check and these are exactly the use-cases in which the per-chain or system-level EDF feasibility test, rather than data age, decides the outcome.

\begin{figure}[!h]
\centering
\begin{subfigure}[t]{0.48\columnwidth}
    \centering
    \includegraphics[width=\linewidth]{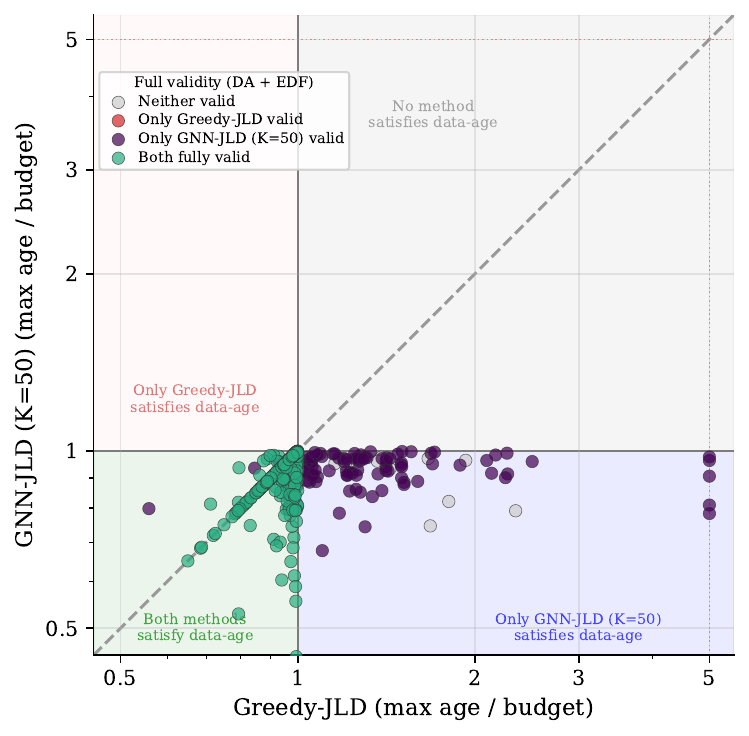}
    \caption{Known set.}
    \label{fig:scatter_known}
\end{subfigure}
\hfill
\begin{subfigure}[t]{0.48\columnwidth}
    \centering
    \includegraphics[width=\linewidth]{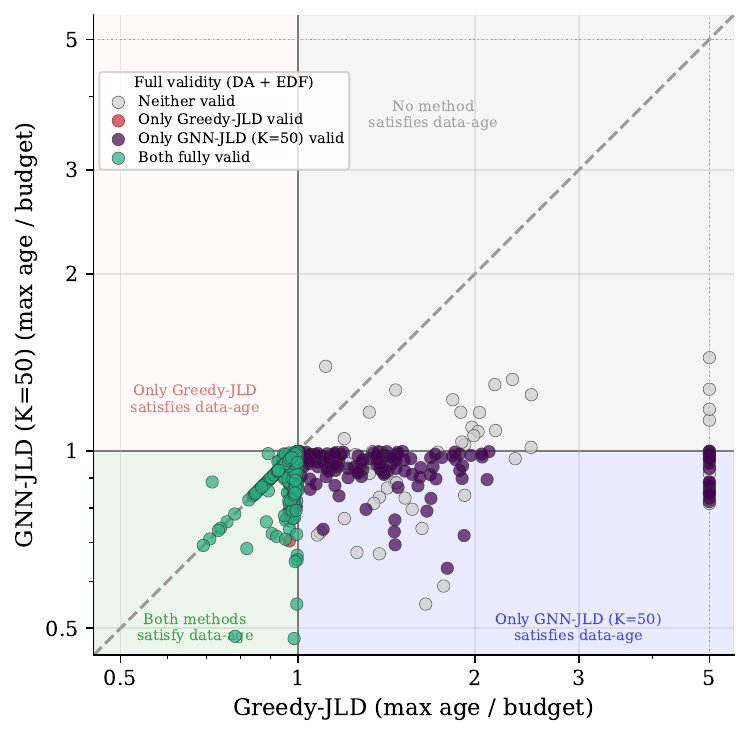}
    \caption{Similar set.}
    \label{fig:scatter_similar}
\end{subfigure}
\caption{Normalized data age per use-case for Greedy-JLD vs.\ GNN-JLD($K{=}50$).}
\label{fig:scatter_known_similar}
\end{figure}

Figures~\ref{fig:scatter_known} and \ref{fig:scatter_similar} show the same per-use-case normalized data age graph between between Greedy-JLD and GNN-JLD($K{=}50$) on the Known and Similar sets, respectively.

Figures~\ref{fig:scatter_br_unknown}, \ref{fig:scatter_br_known}, and \ref{fig:scatter_br_similar} show the per-use-case normalized data age graph between GreedyRand-JLD (K=10) and GNN-JLD($K{=}50$) on the Unknown, Known and Similar sets, respectively.

\begin{figure}[!h]
\centering
\begin{subfigure}[t]{0.48\columnwidth}
    \centering
    \includegraphics[width=\linewidth]{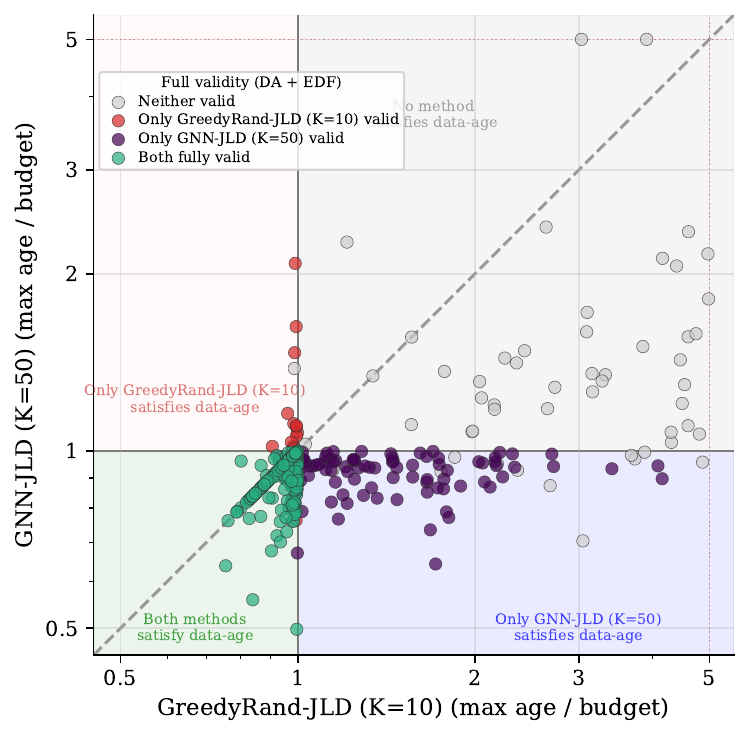}
    \caption{Unknown set.}
    \label{fig:scatter_br_unknown}
\end{subfigure}
\hfill
\begin{subfigure}[t]{0.48\columnwidth}
    \centering
    \includegraphics[width=\linewidth]{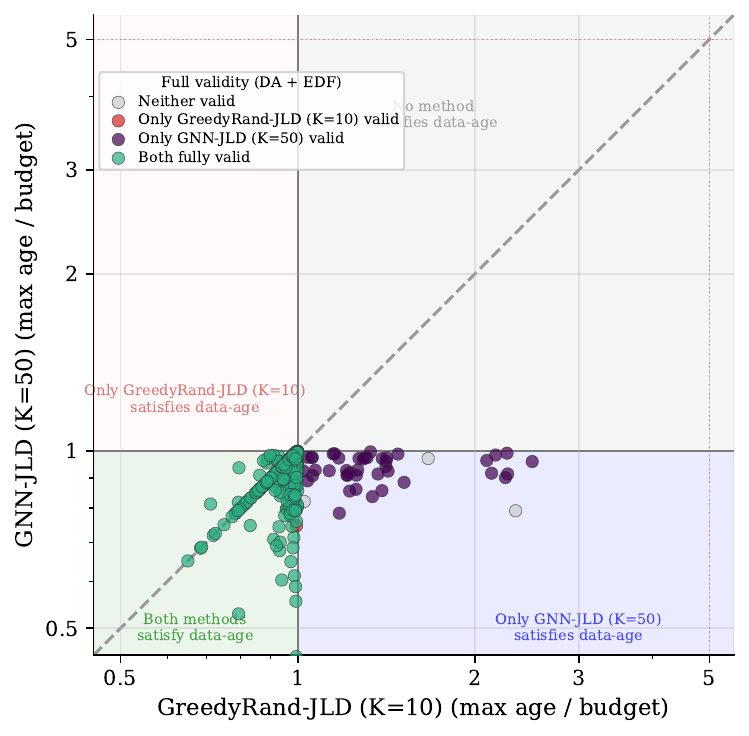}
    \caption{Known set.}
    \label{fig:scatter_br_known}
\end{subfigure}

\vspace{0.5em}

\begin{subfigure}[t]{0.48\columnwidth}
    \centering
    \includegraphics[width=\linewidth]{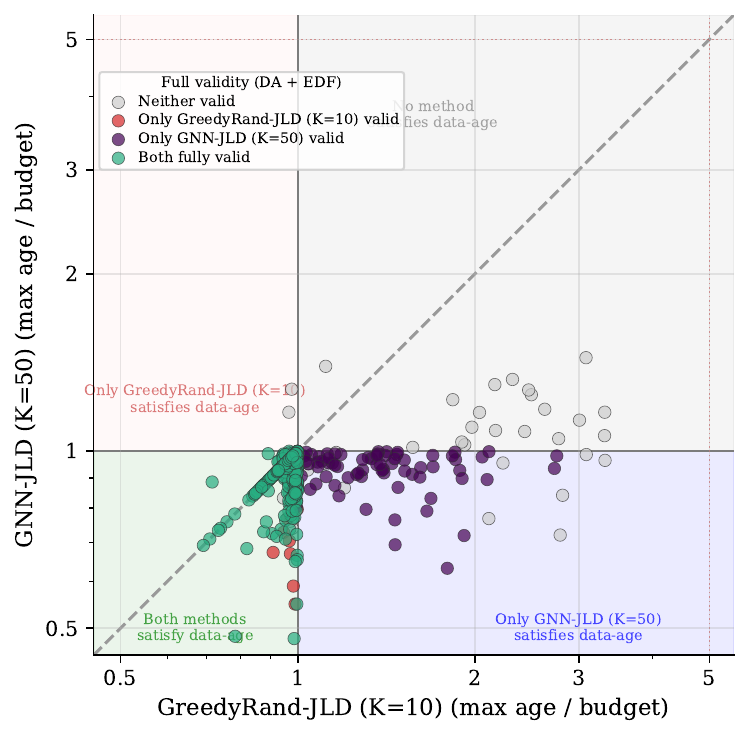}
    \caption{Similar set.}
    \label{fig:scatter_br_similar}
\end{subfigure}
\caption{Normalized data age per use-case for GreedyRand-JLD vs.\ GNN-JLD($K{=}50$).}
\label{fig:scatter_br_gnn}
\end{figure}

In all cases we see that the ``Only GNN-JLD valid'' cluster in the blue quadrant in which the cases solved exclusively by the GNN-JLD are shown are the majority of use-cases compared to the other method. Moreover, when both methods succeed (green markers in the green quadrant) or when both methods fail, GNN-JLD achieves either comparable or lower normalized age despite using chain-HP indexing with more JLDs. This is particularly visible when both fail (grey quadrant) since most of the use-cases have lower cost with GNN-JLD($K{=}50$). 

\section{Related Work}
\label{sec:related}
Cause-effect chains and their schedulability has been extensively studies with classical analysis methods~\cite{Guenzel-TECS24-end-to-end-tutorial}. Hamann et al.~\cite{Hamann-ECRTS17-communication-centric} present the Bosch communication-centric design that motivates much of this line of work, while Schlatow et al.~\cite{Schlatow-SIES18-data-age-analysis-optimisation} and Dürr et al.~\cite{Duerr-TECS19-sporadic-cause-effect-chains} extended the analysis to optimisation and to sporadic chains in distributed systems, respectively.  Subsequent work by Günzel et al.~\cite{GuenzelBecker25RTAS} optimises task phasing in harmonic and semi-harmonic settings.  JLDs were introduced by Becker et al.~\cite{becker2016synthesizing,becker2018endtoend, becker2016mechaniser} as a schedule-agnostic mechanism for constraining the worst-case data age of cause-effect chains together with the Greedy-JLD heuristic. However, the Greedy-JLD heuristic does not check whether the synthesised JLDs are enforceable under any concrete scheduler. Klaus et al.~\cite{klaus2019constrained} studied the run-time overheads of enforcing JLDs in real systems and analysed the trade-off between data-age tightness and synchronisation cost, but their work assumes the JLD set is already given. Our paper directly addresses the synthesis side of this gap by producing JLD sets that are simultaneously data-age feasible \emph{and} enforceable under EDF. Forget et al.~\cite{forget} studied FP scheduling under multi-rate depedencies through release-time/deadline adjustment and priority assignment. 
More recently, Becker~\cite{becker_merging} proposed eliminating runtime JLD enforcement by merging tasks connected by JLDs into a multi-frame task using a CP formulation, whereas our work synthesizes JLD sets and verifies their data-age correctness, EDF enforceability, and system-level schedulability.

Using Machine Learning (ML) in real-time systems has increased recently~\cite{Editorial-RTS25-learning-roadmap}. Several papers in~\cite{Editorial-RTS25-learning-roadmap} argue for integration of ML with scheduling theory: Guo et~al.~\cite{Guo-RTS25-ML-marry-RT-scheduling} survey the use of NN in real-time scheduling, while Casini~\cite{Casini-RTS25-MILP-or-AI} asks when AI-based heuristics outperform exact MILP formulations. A recurring theme is that learned policies, while fast, must be paired with verifiers that guarantee correctness. This is what we achieve with our Generate-and-Verify architecture instantiated for the JLD synthesis problem. Similarly, a polynomial-time verifier is employed in~\cite{10.1007/s11241-025-09450-y} for learning-assisted schedulability analysis. Heinz et al.~\cite{heinz2025rl} apply RL to constraint-programming search, demonstrating that learning-guided exact methods can match or exceed hand-tuned solvers. ML has been recently used for scheduling problems. In~\cite{Zhou-ICAICE23-multicore-GNN-DRL} a graph attention network is combined with deep reinforcement learning to schedule DAGs on multi-core processors. A GNN-for-scheduling survey for generic job shop and flow shop scheduling problems is presented in~\cite{SMIT2025106914} while~\cite{BOUSKA2023990} studies learning-driven scheduling for single-machine and multi-machine tardiness problems. No prior work applies ML to the synthesis of \emph{JLDs for cause-effect chains}. 

\section{Conclusion}
\label{sec:conclusion}

We introduced a \emph{Generate-and-Verify} framework for synthesizing enforceable and schedulable job-level dependencies in EDF-scheduled cause-effect chains, pairing fast candidate generators with a safe DP data-age checker and an EDF feasibility transform. Within this framework we proposed two novel generators: a randomized variant of the greedy heuristic that lifts the one-JLD-per-pair restriction, and an ML-based generator combining a GNN for edge selection with a per-edge MLP for position selection, sampled with a temperature-controlled schedule. Our evaluation shows that both generators outperform the original greedy heuristic in fully-valid use-cases (including on period configurations unseen during training) and that the ML-based generator achieves orders-of-magnitude lower synthesis time. The ML-approach and verifier are task-model-agnostic and can be readily extended to other scheduling policies or richer task models with constrained deadlines or BCET $\neq$ WCET and LET semantics.

\bibliographystyle{plainurl}
\bibliography{references}

\end{document}